\def\cov{\mathrm{cov}}
\def\t{\mathrm{T}}
\def\real{\mathbb{R}}
\def\tdomain{\mathcal{T}}
\def\bx{\mathbf{x}}
\def\by{\mathbf{y}}
\def\bu{\mathbf{u}}
\def\bX{\mathbf{X}}
\def\bepsilon{\boldsymbol{\epsilon}}
\def\bmu{\boldsymbol{\mu}}
\def\mvec{\mathrm{vec}}
\newtheorem{theorem}{Theorem}
\newtheorem{lemma}{Lemma}
\newtheorem{corollary}{Corollary}
\newtheorem{remark}{Remark}
\newtheorem{assumption}{Assumption}
\newcommand{\blind}{0}
\begin{document}

\def\spacingset#1{\renewcommand{\baselinestretch}%
{#1}\small\normalsize} \spacingset{1}


\if0\blind
{
  \title{\bf Dynamic Principal Component Analysis in High Dimensions}
  \author{Xiaoyu Hu \\
  Department of Statistics \& Data Science, \\
  National University of Singapore, Singapore\\
  Fang Yao \\
    Department of Probability \& Statistics, School of Mathematical Sciences,\\
    Center for Statistical Science, Peking University, Beijing, China}
  \maketitle
} \fi

\if1\blind
{
  \bigskip
  \bigskip
  \bigskip
  \begin{center}
    {\LARGE\bf Dynamic Principal Component Analysis in High Dimensions}
\end{center}
  \medskip
} \fi

\bigskip
\begin{abstract}
Principal component analysis is a versatile tool to reduce dimensionality which has wide applications in statistics and machine learning. It is particularly useful for modeling data in high-dimensional scenarios where the number of variables $p$ is comparable to, or much larger than the sample size $n$.
Despite an extensive literature on this topic, researchers have focused on modeling static principal eigenvectors, which are not suitable for stochastic processes that are dynamic in nature.  
To characterize the change in the entire course of high-dimensional data collection, we propose a unified framework to {directly estimate dynamic eigenvectors of covariance matrices}. 
Specifically, we formulate an optimization problem by combining the local linear smoothing and regularization penalty together with the orthogonality constraint, which can be effectively solved by manifold optimization algorithms. We show that our method is suitable for high-dimensional data observed under both common and irregular designs, and theoretical properties of the estimators are investigated under $l_q (0 \leq q \leq 1)$ sparsity. 
Extensive experiments demonstrate the effectiveness of the proposed method in both simulated and real data examples.
\end{abstract}

\noindent%
{\it Keywords:}  Dimension reduction; Local linear smoothing; Manifold optimization; Sparsity.
\vfill

\newpage
\spacingset{1.9} 

\section{Introduction}\label{sec:intro}

Principal component analysis (PCA) has been widely used to reduce dimensionality and extract useful features by transforming the original variables into a few new uncorrelated variables while retaining most information in the data \citep{anderson1963asymptotic}. It is an important tool in various applications, such as data compression and reconstruction \citep{sirovich1987low,turk1991face}.
Despite its importance, existing works in this field mainly focus on modeling a static decomposition, where principal eigenvectors are invariant with respect to time. 
However, technological advances enable data collection in dynamic environments that often vary with time or other index variables. Such data are expected to possess dynamic structures with eigenvectors/eigenspaces varying with time, which makes existing methods less applicable.
To tackle this issue, we aim to dynamically estimate leading eigenvectors of covariance matrices to capture the time-varying information, which is referred to as the dynamic PCA (DPCA). It has wide applications in signal processing, e.g., subspace tracking \citep{delmas2010subspace}.

A straightforward way to conduct the DPCA is to perform PCA on the sample covariance matrix at each observed grid point \citep{berrendero2011principal}. However, this has several limitations. First, one cannot obtain a smooth estimate over the whole time period in an integrative manner by their method. Second, under the irregularly/sparsely observed case for each subject, one cannot directly calculate the sample covariance matrix, which makes their method inapplicable. More importantly, PCA is known to behave poorly in high-dimensional settings, where the number of variables $p$ is comparable to or much larger than the sample size $n$ \citep{johnstone2009consistency}.

In the high-dimensional static case, various sparse PCA methods are developed and studied in the literature \citep{jolliffe2003modified,zou2006sparse,shen2008sparse,amini2008high,witten2009penalized,berthet2013computational,brennan2019optimal}.
For spiked covariance models, \citet{johnstone2009consistency} proposed the diagonal thresholding algorithm by retaining variables with large sample variances, and \citet{ma2013sparse,deshpande2014sparse,krauthgamer2015semidefinite} further refined this estimation. Moreover, \citet{ding2019subexponential,holtzman2020greedy} proposed different algorithms for the sparse PCA problem in the spiked covariance model. 
In more general settings, \citet{vu2013minimax} gave the non-asymptotic lower and upper bounds on the minimax subspace estimation error. In addition, \citet{vu2013fantope} considered a convex relaxation strategy based on the convex hull of low rank projection matrices.
Yet, if one adopts the method in \citet{berrendero2011principal}, the existing sparse PCA methods in the static case are not readily applicable for modeling time-varying principal eigenvectors. For instance, the irregular sampling scheme and the dependence among measurements from the same subject pose new challenges to methodological and theoretical developments. 

Another possible way to obtain time-varying eigenvectors is through the eigendecomposition of estimated dynamic covariance matrices, and dynamic covariance models have been explored in the literature.
In low-dimensional settings, the nonparametric or semiparametric estimators are constructed  \citep{zhu2009intrinsic,yin2010nonparametric,yuan2012local}. For high-dimensional data, \citet{chen2016dynamic} proposed a sparse estimate using the kernel smoothing and thresholding. 
However, it is unclear about the quality of eigenvector estimates based on performing standard PCA on the sparse covariance matrices.
Our numerical studies reveal that the principal eigenvectors deduced from dynamic covariance estimation in \citet{chen2016dynamic} perform sub-optimally, especially when the observational grids are sparse and the dimension is large.
Moreover, these existing works about dynamic covariance matrices did not account for the dependence among observations from the same process which is an important nature of repeated measurements \citep{cai2011optimal}. 

In this work, we propose a unified framework with theoretical guarantees for the DPCA. Specifically, to resolve the problems caused by the sample covariance matrix and the irregular sampling scheme, we use the local linear smoothing \citep{jianqing1996local} for estimation. To deal with high dimensionality, we restrict our attention to the eigenvectors with sparsity structures. 
Consequently, we formulate an optimization problem which combines the local linear smoothing and sparse regularization.
The proposed method has some remarkable features. 
First, it is applicable to high-dimensional data under both common and irregular/sparse designs \citep{cai2011optimal}.
Second, instead of adopting the convex relaxation strategy which is computationally expensive with the computational cost $O(p^3)$ per iteration \citep{vu2013fantope}, our optimization problem is directly defined on the Stiefel manifold, which can be solved by leveraging manifold optimization algorithms, e.g., the proximal gradient method in \citet{chen2020proximal} operating with $O(p^2d)$, where $d$ is the number of principal eigenvectors of interest. 
Third, our procedure consists of two steps: the first step generates an initial estimate from the optimization, and the second step refines the estimate by hard thresholding and re-optimization on the reduced set of variables.
This two-step algorithm helps successfully identify significant features and enhance the interpretability, which leads to consistent estimators under the $l_q$ ($0 \leq q \leq 1$) sparsity. 
Moreover, we show that the convergence rate of resulting estimators exhibits a phase transition phenomenon that attains either nonparametric or parametric rate, depending on the sampling frequency, i.e., how sparse/dense the repeated measurements are observed, see Section \ref{sec:thm}. It is noteworthy that the convergence of estimated principal eigenvectors is faster than that in dynamic covariance estimation, which coincides with findings in the static case \citep{vu2012minimax, bickel2008covariance, cai2012minimax}.

While both DPCA and functional PCA (FPCA) \citep{ramsay2005}  are tools to model random functions, they are essentially different frameworks, see Remark \ref{rmk:dpca-fpca}.
Although FPCA is widely used to represent a single or a small  number of functional processes, its performance is not guaranteed and can be unreliable in high dimensions due to error accumulation \citep{yao2005functionala, yao2005functionalb, chiou2014multivariate}. In this regard, the DPCA is preferred to capture the dynamic information with low-dimensional structures in applications such as data compression and reconstruction. 
This is illustrated in the real data example in terms of recovery errors in Section \ref{sec:realdata}.

The remainder of the article is organized as follows. In Section \ref{sec:dpca}, we first introduce the dynamic PCA, then we provide the $l_q (0 \le q \le 1)$ sparsity assumption in dynamic settings and the formulation of our optimization problem, and describe procedures for practical implementation. In Section \ref{sec:thm}, we present theoretical results under suitable regularity conditions. Simulation results are included in Section \ref{sec:sim}, followed by an application to the heartbeat sound data in Section \ref{sec:realdata}. The additional results and technical proofs are deferred to the Appendix and Supplementary Material.

\section{Dynamic principal component analysis with sparsity}\label{sec:dpca}

\subsection{Dynamic principal component analysis}\label{subsec:dps}
We begin with some notations used in the sequel.
For a matrix $A = (a_{ij})_{i,j=1}^p \in \real^{p \times p}$, $\mvec(A)$ denotes the vector with length $p^2$ obtained by stacking the columns of $A$. We define the Frobenius norm $\|A\|_F = \left(\sum_{i,j} a_{ij}^2\right)^{1/2}$, the elementwise $l_{\infty}$ norm $\|A\|_{\infty} = \max_{1\le i,j \le p} |a_{ij}|$ and the elementwise $l_1$ norm $\|A\|_1 = \sum_{i,j} |a_{ij}|$. For a vector $\bu \in \real^p$, denote its $l_q$ norm by $\|\bu\|_q = \left(\sum_{j=1}^p |u_j|^q \right)^{1/q}$ with $\|\bu\|_0$ defined as the number of nonzero elements. For two real numbers $a$ and $b$, define $a \land b = \min(a, b)$ and $a \lor b = \max(a,b)$. We write $a \lesssim b$ if $a \le C b$ for some constant $C>0$.

Let $\{ \bX(t): t\in \tdomain\}$ be a vector-valued stochastic process defined on a compact interval $\tdomain = [0,1]$, where $\bX(t)=(X_1(t), \dots, X_p(t))^{\t}$. 
The mean and diagonal covariance functions are assumed to be continuous and denoted by $\bmu(t) = (\mu_1(t), \dots, \mu_p(t))^{\t} = E\bX(t)$ and $\Sigma(t) = (\sigma_{jk}(t))_{j,k=1}^p = E\bX(t)\bX(t)^{\t} - \bmu(t)\bmu(t)^{\t}$, respectively.
For each fixed $t$, applying multivariate PCA, we obtain
\begin{equation}\label{eq:dpca}
\bX(t) = \bmu(t) + \sum_{k=1}^p \xi_k(t) \bu_k(t), 
\end{equation}
where $\bu_k(t)$ is the $k$-th principal eigenvector and $\xi_k(t) = (\bX(t)-\bmu(t))^{\t}\bu_k(t)$ is the $k$-th principal component score with $E\xi_k(t)=0$ and $\mathrm{var}(\xi_k(t))=\lambda_k(t)$.
Without loss of generality, suppose that $\lambda_1(t) \geq \lambda_2(t) \geq \cdots \geq \lambda_p(t) \geq 0$. 
Moreover, we have $\cov(\xi_k(t), \xi_l(t))=0$ and $\bu_k(t)^{\t}\bu_l(t)=0$ for $k\neq l$ at each $t$. The time-varying version of PCA \eqref{eq:dpca} is called dynamic PCA (DPCA).

\begin{remark}\label{rmk:dpca-fpca}
	As discussed in Section \ref{sec:intro}, the DPCA is essentially different from FPCA. In particular, the DPCA represents data in the Euclidean space, i.e., it applies multivariate PCA at each $t$ to obtain $\bX(t) = \bmu(t) + \sum_{k=1}^p \xi_k(t) \bu_k(t)$, where $\bu_k(t), k=1,\dots,p$ form an orthonormal basis in $\real^p$. By comparison, the FPCA represents $\bX(t)$ in the infinite-dimensional function space, i.e., $\bX(t) = \bmu(t) + \sum_{k=1}^{\infty} \theta_k \boldsymbol{\phi}_k(t)$ where $\theta_k$ are uncorrelated functional principal scores and $\boldsymbol{\phi}_k(t)$ are orthonormal basis functions in the space of square integrable functions $L^2(\tdomain)$.  
	Note that, instead of the auto-covariance function $C(s,t)=E\bX(s)\bX(t)^{\t} - \bmu(s)\bmu(t)^{\t}$, the DPCA studies the much simper diagonal covariance function $\Sigma(t)=C(t,t)$.
\end{remark}

One advantage of DPCA is the ability to capture the dynamic information contained in data, which facilitates interpretation.
Formally, the dynamic principal eigenvectors can be found by solving the following optimization problem,
\begin{eqnarray}\label{opt:dpca}
\min_{V(t)} & \int_{\tdomain} E\|\bX(t) - \bmu(t) - V(t)V(t)^{\t}\{\bX(t)-\bmu(t)\}\|^2 dt \nonumber\\
s.t. & V(t)^{\t} V(t) = I_d,
\end{eqnarray}
where $V(t) \in \real^{p \times d} $, $I_d$ is a $d \times d$ identity matrix and $d$ is the number of principal eigenvectors of interest. 
The problem \eqref{opt:dpca} is reduced to perform multivariate PCA at each $t$ based on Lemma \ref{lem:dpca} in the Appendix, and a similar result can be found in \citet{berrendero2011principal}. 
Note that if $\lambda_d(t) - \lambda_{d+1}(t) > 0$, then $U(t) =  (\bu_1(t), \dots, \bu_d(t))$ is unique up to an orthogonal matrix, that is, $U(t)O$ is also an optimal solution of \eqref{opt:dpca} for any $d \times d$ orthogonal matrix $O$.
We refer to the subspace $\mathcal{S}(t)$ spanned by the column vectors of $U(t)$ as the dynamic principal subspace, and the corresponding projection matrix is given by $\Pi(t) = U(t)U(t)^{\t}$.

In reality, we observe noisy measurements at common or irregular design points, $y_{ijl} = x_{ij}(t_{il}) + \epsilon_{ijl}, t_{il} \in \tdomain,$
where $\epsilon_{ijl}$ are independent and identically distributed (i.i.d.) measurement errors independent of $x_{ij}$ with mean zero and variance $\sigma^2$, $i=1, \dots, n; j=1, \dots, p$ and $l = 1, \dots, m_i$, where $m_i$ is the number of observations for each trajectory of the $i$-th subject. We denote $N = \sum_{i=1}^n m_i$ and $\bar{m} = \sum_{i=1}^n m_i / n$. 
Under the common design, all the observations are sampled at the same locations, i.e., $t_{1l} = t_{2l} = \cdots = t_{nl} = t_l$ for all $l=1, \dots, m$ where $m=\bar{m}=m_1=\cdots=m_n$, while the locations $t_{il}$ are sampled independently from a compact interval $\tdomain$ under the irregular design \citep{cai2011optimal}.

An empirical version of (\ref{opt:dpca}) is formulated by substituting the expectation with its estimate. 
A naive estimate is to use the sample covariance matrix $S$, which however has some drawbacks as discussed in Section \ref{sec:intro}. First, since the data are collected at discrete grids, one can merely obtain estimates at observed times instead of the whole period $\tdomain$. Second, the sample covariance matrices are infeasible under the irregular design.
Third, the estimates may fluctuate significantly without considering smoothness. 
Therefore, a reliable and smooth estimate is desirable. 
To illustrate the main idea, we assume $\bmu(t)=0$ for the moment. 
To obtain the estimate at the target time $t$, we borrow the information of the data observed at neighboring grids. 
Thus, motivated by the local linear smoothing, we propose an empirical optimization problem as follows,
\begin{eqnarray*}\label{opt:em-dpca}
	\min_{V(t)} & & \sum_{i=1}^n \sum_{l=1}^{m_i} w_{il}(t) \|\by_{il} - V(t)V(t)^{\t}\by_{il}\|^2  \\
	s.t. & & V(t)^{\t}V(t) = I_d, \nonumber
\end{eqnarray*}
where $w_{il}(t)= \{R_2 K_h(t_{il}-t) - R_1K_h(t_{il}-t)(t_{il}-t)\}/\{R_0R_2 - R_1^2\}$, $R_{\ell} = \sum_{i=1}^n\sum_{l=1}^{m_i} K_h(t_{il}-t)(t_{il}-t)^{\ell}$, $\ell=0, 1, 2$, $h$ is the bandwidth, $K_h(\cdot) = K(\cdot/h)/h$ and $K$ is a kernel function \citep{jianqing1996local}. It can be equivalently formulated as
\begin{eqnarray}\label{opt:smoothS}
\max_{V(t)}  & & \mathrm{Tr}[ V(t)^{\t} \hat{\Sigma}(t) V(t) ]  \\
s.t. & & V(t)^{\t} V(t) = I_d, \nonumber
\end{eqnarray}
where $\hat{\Sigma}(t) = \sum_{i=1}^n \sum_{l=1}^{m_i} w_{il}(t) \by_{il}\by_{il}^{\t}$ is the smoothed covariance matrix. 

Note that our proposal readily adapts to both common and irregular designs using pooled data. More generally, incorporating the estimated mean function by the kernel smoothing, the estimator $\hat{U}(t)$ can be obtained by substituting $\hat{\Sigma}(t)$ in (\ref{opt:smoothS}) with
\begin{eqnarray*} \label{eq:smoothS?general}
	\hat{\Sigma}(t) & = & \sum_{i=1}^n \sum_{l=1}^{m_i}  w_{il}(t)\by_{il}\by_{il}^{\t} -  \sum_{i=1}^n \sum_{l=1}^{m_i} w_{il}(t) \by_{il} \sum_{i=1}^n \sum_{l=1}^{m_i} w_{il}(t) \by_{il}^{\t}.
\end{eqnarray*}
In addition, under the common design where the data are observed at regular grids, practitioners can adopt an alternative estimate of $\hat{\Sigma}(t)$ for simplified computation,
\begin{equation}\label{eq:smoothS-common}
\hat{\Sigma}_{common}(t) =  \sum_{l=1}^{m} w_{l}(t) \sum_{i=1}^n n^{-1} (\by_{il}- \bar{y}_l) (\by_{il} - \bar{y}_l)^{\t}, 
\end{equation}
where $\bar{y}_l = \sum_{i=1}^n y_{il}/n$, $w_l(t) = \left\{ R_{2,c}K_h(t_l - t) - R_{1,c}K_h(t_l - t)(t_l-t) \right\} / (R_{2,c}R_{0,c} - R_{1,c}^2)$, $R_{\ell,c} = \sum_{l=1}^m K_h(t_l -t)(t_l -t)^{\ell}$, $\ell = 0, 1, 2$.

\subsection{Sparsity and estimation in high dimensions} \label{subsec:sparsity}

For high-dimensional data, the number of variables $p$ is comparable to or even much larger than the sample size $n$. The estimator $\hat{U}(t)$ in Section \ref{subsec:dps} may become inconsistent without additional structures. The sparsity assumption is necessary to enhance the interpretability and improve the estimates in high dimensions. 
Assume that $U(t) \in \mathcal{U}(q, R_q; \tdomain)$, where
\begin{equation*}\label{eq:lqset}
\mathcal{U}(q, R_q; \tdomain) = \left\{ U(t) \in \real^{p \times d}, t \in \tdomain \bigg| U(t) \in \mathbb{V}_{p,d}, \sup_{t \in \tdomain}\max_{1 \leq j \leq d} \|\bu_j(t)\|_q^q \leq R_q \right\},
\end{equation*}
with $0 < q \leq 1$. When $q=0$,
\begin{equation*}\label{eq:l0set}
\mathcal{U}(0, R_0; \tdomain) = \left\{ U(t) \in \real^{p \times d}, t \in \tdomain \bigg| U(t) \in \mathbb{V}_{p,d}, \sup_{t \in \tdomain}\max_{1 \leq j \leq d} \|\bu_j(t)\|_0 \leq R_0 \right\}.
\end{equation*}
The set $\mathcal{U}(q, R_q; \tdomain)$ is non-empty and the $l_q$ constraint is active only when $1 \leq R_q \leq p^{1-q/2}$. In Section \ref{sec:thm}, we consider bounded $R_q$ to simplify the theoretical exposition.
The family of leading eigenvectors over $\tdomain$ defined in $\mathcal{U}(q, R_q; \tdomain)$ generalizes the notion of static eigenvectors in \citet{vu2012minimax}. We stress that if the sparsity condition does not hold uniformly over $\tdomain$, our method can still apply to the subregions of $\tdomain$ where this condition holds. 

Recall that the projection matrix is $\Pi(t) = U(t)U(t)^{\t}$. By definition, $\Pi_{jj}(t)=0$ holds if and only if each element of the $j$-th row of $U(t)$ is zero. Further, it implies that if $\Pi_{jj}(t)=0$, then all entries of the $j$-th row/column of $\Pi(t)$ are 0. Denote the support set $J(t) = \{j: \Pi_{jj}(t)>0\}$.
For notational convenience, we introduce the following block representation of $\Sigma(t)$:
\[ \left( \begin{array}{cc}
\Sigma_{JJ}(t) & \Sigma_{JJ^c}(t) \\
\Sigma_{J^cJ}(t) & \Sigma_{J^cJ^c}(t)
\end{array}
\right). \]
Similar block representations can be defined for other matrices and vectors.
Apparently, the {principal eigenvectors} at $t$ does not depend on the variables outside of the set $J(t)$ in the sense that all elements of $U_{J^c}(t)$ are 0.

Note that $U(t) \in \mathbb{V}_{p,d}$, where $\mathbb{V}_{p,d} = \{ V \in \real^{p \times d}| V^{\t}V = I_d \}$ is the Stiefel manifold, which results in a non-convex problem which is hard to solve. 
Most existing algorithms for static sparse PCA require deflation steps \citep{shen2008sparse,mackey2009deflation} or convex relaxation \citep{d2005direct,vu2013fantope} to circumvent the orthogonality constraint, which either lack theoretical guarantees or are computational expensive. 
To avoid these issues, our optimization problem is defined directly on the Stiefel manifold $\mathbb V_{p,d}$ which can be solved by manifold optimization algorithms.
The regularized manifold optimization problem is formulated as follows,
\begin{equation}\label{opt:sparse}
\begin{array}{cc}
\min \limits_{V(t)} & -\mathrm{Tr}[ V(t)^{\t} \hat{\Sigma}(t) V(t) ] + \rho_t \|V(t)\|_1 \\
s.t. & V(t)^{\t} V(t) = I_d,
\end{array}
\end{equation}
where $\rho_t>0$ is the regularization parameter at $t$. We allow the parameter $\rho$ to depend on $t$, which makes our proposal fully adaptive to different sparsity levels varying with $t$.
The optimization problem \eqref{opt:sparse} deals with sparsity and orthogonality jointly, which can be solved effectively with recent developments for manifold optimization, e.g., the proximal gradient method \citep{chen2020proximal}. 

To improve estimation, we treat the solution of \eqref{opt:sparse} as an initial estimate which is denoted by $\hat{U}^0(t)$, and then propose a refined version. Specifically, we add a thresholding step to further filter out the variables irrelevant to the principal eigenvectors. Denote the set of remaining variables by $\hat{J}(t) = \{j: \hat{\Pi}_{jj}^0(t) \geq \gamma_t\}$, where $\hat{\Pi}^0(t)=\hat{U}^0(t)\hat{U}^0(t)^{\t}$ and $\gamma_t > 0$ is the thresholding parameter at $t$. Since the estimate after thresholding may not belong to the Stiefel manifold $\mathbb{V}_{p,d}$, we re-estimate the principal eigenvectors afterwards. Our refined estimate is given by
\[\hat{U}(t) = \left(\begin{array}{c}
\hat{U}_{\hat{J}(t)}(t) \\
0
\end{array} \right),\]
where $\hat{U}_{\hat{J}(t)}(t)$ is the solution of the problem,
\begin{equation}\label{pro:refineop}
\begin{array}{cc}
\min \limits_{V(t)} & -\mathrm{Tr}\{ V(t)^{\t} \hat{\Sigma}_{\hat{J}(t)\hat{J}(t)}(t) V(t) \} + \rho_t \|V(t)\|_{1,1} \\
s.t. & V(t)^{\t} V(t) = I_d.
\end{array}
\end{equation}

Note that the estimated principal subspaces are readily obtained by spanning the columns of $\hat{U}(t)$ with projection matrices $\hat{\Pi}(t) = \hat{U}(t)\hat{U}(t)^{\t}$.
The two-step estimation procedure successfully identifies the significant variables and provides consistent estimators under general $l_q$ sparsity, which is theoretically and empirically demonstrated in Sections \ref{sec:thm} and \ref{sec:sim}.

\subsection{Tuning parameters}\label{subsec:tuning}
In this section, we discuss how to select parameters that are involved in the estimation procedure. 
Note that in the dynamic setting, the number of principal eigenvectors of interest $d$ may be a constant or vary with $t$. There exists no consensus on the selection of $d$ which depends on the specific application. For example, it could be selected based on the fraction of variance explained (FVE). 
In supervised problems such as regression or classification, it may be tuned by $k$-fold cross validation to minimize the prediction/classification error. 
Here we mainly consider tuning three other parameters, the bandwidth $h$, the sparsity parameter $\rho_t$ and the thresholding parameter $\gamma_t$. We suggest to select them in a sequential manner \citep{chen2015localized,chen2016dynamic}. For the bandwidth $h$, we use the leave-one-curve-out cross-validation approach \citep{rice1991estimating,yao2005functionala}.
Specifically, we tune the bandwidth $h$ given $\rho_t=0$ and $\gamma_t=0$ by maximizing the cross-validated inner product,
\begin{equation*}
h^{*} = \mathop{\arg \max}_{h \in \mathcal{A}_1} \frac{1}{n\bar{m}} \sum_{i=1}^n \sum_{l=1}^{m_i} \mathrm{Tr}\{ \hat{U}_{h,0,0}^{-i}(t_{il})^{\t}(\by_{il} - \hat{\bmu}(t_{il})) (\by_{il}-\hat{\bmu}(t_{il}))^{\t} \hat{U}_{h,0,0}^{-i}(t_{il}) \},
\end{equation*}
where $\hat{\bmu}$ is the estimated mean function which refers to the sample mean under the common case and the local linear estimate under the irregular design, $\mathcal{A}_1$ is a candidate set of $h$, $\hat{U}_{h,0,0}^{-i}$ is estimated by leaving out the $i$-th subject with the bandwidth $h$, $\rho_t=0$ and $\gamma_t=0$. 
Next, the parameter $\rho_t$ is determined by $k$-fold cross-validation. The data is divided into $k$-folds by subjects, denoted by $\mathcal{D}_1, \dots, \mathcal{D}_k$. Let $\hat{U}_{h, \rho_t, \gamma_t}^{-\nu}(t)$ be the estimator using data other than $\mathcal{D}_{\nu}$ at time $t$ with parameters $h$, $\rho_t$ and $\gamma_t$. Let $\hat{\Sigma}_h^{\nu}(t)$ be the smoothed covariance matrix estimate at $t$ using $\mathcal D_{\nu}$ with the bandwidth $h$. Next, we choose $\rho_t$ given the selected bandwidth $h^{*}$ and $\gamma_t=0$ by maximizing the cross-validated inner product,
\begin{equation*}\label{tune:rhocv}
\rho_t^{*} = \mathop{\arg\max}_{\rho_t \in \mathcal{A}_{2,t}} \frac{1}{k} \sum_{\nu=1}^{k} \mathrm{Tr} [ \{\hat{U}_{h^{*}, \rho_t, 0}^{-\nu}(t)\}^{\t} \hat{\Sigma}_{h^{*}}^{\nu}(t) \hat{U}_{h^{*}, \rho_t, 0}^{-\nu}(t)],
\end{equation*}
where $\mathcal{A}_{2,t}$ is a candidate set for $\rho_t$.
At last, we tune the thresholding parameter $\gamma_t$, given the selected bandwidth $h^{*}$ and sparsity level $\rho_t^{*}$, by a trade-off between the explained variance $Ip(\gamma_t)$ and model complexity, i.e., the number of retained variables, where
\begin{equation*}\label{tune:gammacv}
Ip(\gamma_t) =  \frac{1}{k} \sum_{\nu=1}^{k} \mathrm{Tr} [ \{\hat{U}_{h^{*}, \rho_t^{*}, \gamma_t}^{-\nu}(t)\}^{\t} \hat{\Sigma}_{h^{*}}^{\nu}(t) \hat{U}_{h^{*}, \rho_t^{*}, \gamma_t}^{-\nu}(t)],
\end{equation*}
where $\gamma_t \in \mathcal{A}_{3,t}$, $\mathcal{A}_{3,t}$ is a candidate set and $Ip(\gamma_t)$ is the cross-validated inner product when the threshold equals $\gamma_t$. The model complexity depicts the cardinality of the support set $\hat{J}(t)$. One can select the $\gamma_t$ to achieve model parsimony without much information loss. We demonstrate the performance of selected parameters in Section \ref{sec:sim}.

\section{Theoretical results}\label{sec:thm}

In this section, we investigate the theoretical properties of the proposed estimator under both common and irregular designs.
To measure the performance of the estimator, we use the notion of the distance defined in \citet{vu2013minimax}.
For $U, V \in \mathbb{V}_{p,d}$, the squared distance is defined by
\begin{equation}\label{eq:distance}
d^2(U, V) = d^2(\mathcal{E}, \mathcal{F}) = \frac{1}{2}\|E - F\|_F^2, 
\end{equation}
where $\mathcal{E}$ and $\mathcal{F}$ are subspaces with projection matrices $E=UU^{\t}$ and $F=VV^{\t}$, respectively.

Some assumptions necessary for theoretical results are provided, concerning the properties of variables and kernel functions. Assumption \ref{assump:eigengap} ensures that the $d$-dimensional principal subspace is well-defined.
In multivariate cases, the commonly used assumption for sparse PCA is that $x_j^2$ is sub-exponential, while Assumption \ref{assump:tail} is adapted to random processes, which holds rather generally, e.g., Gaussian processes.

\begin{assumption}\label{assump:eigengap}
	Assume that $\lambda_d(t) - \lambda_{d+1}(t) > 0$ for all $t \in \tdomain$.
\end{assumption}

\begin{assumption}\label{assump:tail}
	For each $j=1, \dots, p, X_j^2(t)$ is sub-exponential uniformly in $t \in \tdomain$, that is, there exists a positive constant $\lambda_0$ such that $\sup_{t \in \tdomain} Ee^{\lambda X_j^2(t)} < \infty$ for $|\lambda| < \lambda_0$. Also assume the measurement error $\epsilon^2$ is sub-exponential.
\end{assumption}

\begin{assumption}\label{assump:smoothmean-cov}
	The mean functions $\mu_j(\cdot)$ and the diagonal covariance functions $\sigma_{jk}(t)$ are twice differentiable and the second derivative is bounded on $\tdomain$ for $j,k=1, \dots, p$.
\end{assumption}

\begin{assumption}\label{assump:p}
	Assume $\log p(n\bar{m}^{-1}+ nh)^{-1} \to 0$ as $n \to \infty$.
\end{assumption}

\begin{assumption}\label{assump:kernel}
	The kernel function $K(\cdot)$ is a bounded and symmetric probability density function on $[-1,1]$ with $\int u^2 K(u) du < \infty$ and $\int K^2(u) du < \infty$. 
\end{assumption}

The smoothness of mean and diagonal covariance functions is imposed in Assumption \ref{assump:smoothmean-cov}, while $\bX(t)$ is not necessarily smooth. 
Assumption \ref{assump:p} indicates $\log p = O(n^c)$ for some $c>0$ since $\bar{m}$ and $h^{-1}$ typically grow at a fractional polynomial order of $n$.
Assumption \ref{assump:kernel} is standard in the kernel smoothing literature \citep{jianqing1996local,chen2016dynamic}.

First, we quantify the performance of the thresholding step by investigating the false positive control and false negative control of $J(t)$. It is revealed in Lemma \ref{lemma:support} that, with a suitable parameter $\gamma_t$, we can recover the support set consistently.
The condition $\min_{j \in J(t)} \Pi_{jj}(t) \geq 2\gamma_t$ assures that the important variables can be distinguished from the noise stochastically. Denote $\Gamma(t) = \Sigma(t) + \sigma^2 I_p$ where $I_p \in \real^{p \times p}$ is an identity matrix.

\begin{lemma}\label{lemma:support}
	Assume $U(t) \in \mathcal{U}(q, R_q; \tdomain)$ and recall that $\hat{\Pi}^0(t)$ is the initial estimator. Note that we have $\|\Pi(t) - \hat{\Pi}^0(t)\|_F^2 \leq C\|\hat{\Sigma}(t) - \Gamma(t)\|_{\infty} = o_p(1)$, $t \in \tdomain$ for some positive constant $C>0$. If $\min_{j \in J(t)} \Pi_{jj}(t) \geq 2\gamma_t$ and $\gamma_t > \|\Pi(t) - \hat{\Pi}^0(t)\|_F$, then the variable selection procedure $\hat{J}(t) := \{ j: \hat{\Pi}_{jj}^0(t) \geq \gamma_t \}$ succeeds.
\end{lemma}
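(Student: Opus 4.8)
My plan is to read ``the procedure succeeds'' as exact support recovery, $\hat{J}(t) = J(t)$, and to establish it as the conjunction of false-positive control ($\hat{J}(t) \subseteq J(t)$) and false-negative control ($J(t) \subseteq \hat{J}(t)$). The entire argument is deterministic once the two hypotheses on $\gamma(t)$ are granted, so I would first isolate the single analytic fact that drives it: for any matrix $M$ one has $|M_{jj}| \le \|M\|_{\infty} \le \|M\|_F$, since a single entry cannot exceed the elementwise maximum, which in turn cannot exceed the root-sum-of-squares. Applying this with $M = \hat{\Pi}^0(t) - \Pi(t)$ and abbreviating $\delta(t) := \|\Pi(t) - \hat{\Pi}^0(t)\|_F$ yields the pointwise diagonal control $|\hat{\Pi}_{jj}^0(t) - \Pi_{jj}(t)| \le \delta(t)$ for every $j$, where by hypothesis $\gamma(t) > \delta(t)$.

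Next I would run the two inclusions. For false positives, fix $j \notin J(t)$; since the diagonal of the projection matrix $\Pi(t) = U(t)U(t)^{\t}$ is nonnegative and, by the definition $J(t) = \{j : \Pi_{jj}(t) > 0\}$, vanishes off $J(t)$, we have $\Pi_{jj}(t) = 0$, hence $\hat{\Pi}_{jj}^0(t) \le \delta(t) < \gamma(t)$ and $j \notin \hat{J}(t)$. For false negatives, fix $j \in J(t)$ and invoke the signal condition $\min_{j \in J(t)} \Pi_{jj}(t) \ge 2\gamma(t)$: then $\hat{\Pi}_{jj}^0(t) \ge \Pi_{jj}(t) - \delta(t) \ge 2\gamma(t) - \delta(t) > \gamma(t)$, so $j \in \hat{J}(t)$. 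The two inclusions combine to $\hat{J}(t) = J(t)$, which is the asserted success.

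The deterministic step above is routine; I expect the genuine obstacle to lie entirely in justifying the perturbation bound $\|\Pi(t) - \hat{\Pi}^0(t)\|_F^2 \le C\|S_h(t) - \Gamma(t)\|_{\infty} = o_p(1)$ quoted at the outset, which is what makes the hypothesis $\gamma(t) > \delta(t)$ attainable for a vanishing threshold. To prove that bound I would pass from the subspace distance to a spectral perturbation via a Davis--Kahan / curvature argument on the Stiefel manifold, exploit the $l_q$ sparsity of $U(t)$ to restrict the relevant perturbation to the effectively low-dimensional support and thereby replace an operator-type norm by the elementwise $\|\cdot\|_{\infty}$ norm, and finally control $\|S_h(t) - \Gamma(t)\|_{\infty}$ uniformly in $t$ by concentration of the local linear estimator. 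The delicate point there --- and the reason \Cref{assump:m} restricts the sampling frequency --- is the dependence among repeated measurements from the same trajectory, which invalidates a naive i.i.d. concentration and must be handled carefully when bounding the smoothed entries of $S_h(t)$.
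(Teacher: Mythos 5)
Your deterministic argument is correct and reaches the same conclusion as the paper, but by a slightly different route. You control each diagonal entry individually via $|\hat{\Pi}^0_{jj}(t)-\Pi_{jj}(t)|\le\|\hat{\Pi}^0(t)-\Pi(t)\|_{\infty}\le\|\hat{\Pi}^0(t)-\Pi(t)\|_F<\gamma(t)$ and then run the two inclusions pointwise; the paper instead uses a counting argument, observing that the two error sets $\{j:\Pi_{jj}=0,\hat{\Pi}^0_{jj}\ge\gamma\}$ and $\{j:\Pi_{jj}\ge2\gamma,\hat{\Pi}^0_{jj}<\gamma\}$ are disjoint subsets of $\{j:|\Pi_{jj}-\hat{\Pi}^0_{jj}|\ge\gamma\}$, so their total cardinality is at most $\|\Pi-\hat{\Pi}^0\|_F^2/\gamma^2<1$ and hence zero. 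Both hinge on exactly the hypothesis $\gamma(t)>\|\Pi(t)-\hat{\Pi}^0(t)\|_F$; yours is more elementary, while the paper's Chebyshev-style count would still yield a bound on the \emph{number} of selection errors if the threshold were only slightly smaller. On the preliminary bound $\|\Pi(t)-\hat{\Pi}^0(t)\|_F^2\le C\|S_h(t)-\Gamma(t)\|_{\infty}$, which the lemma states as given, your sketch points in roughly the right direction but differs from what the paper actually does: rather than a Davis--Kahan perturbation of eigenspaces, the paper applies the curvature inequality of Vu and Lei (their Corollary 4.1) to the optimality of $\hat{U}^0(t)$ for the penalized problem, then bounds the inner product $\langle\Gamma-S_h,\Pi-\hat{\Pi}^0\rangle$ by H\"older duality between $\|\cdot\|_{\infty}$ and $\|\cdot\|_1$, using that $\|\Pi-\hat{\Pi}^0\|_1$ and $\|U-\hat{U}^0\|_1$ are bounded (this is where the $l_q$ class and the choice $\rho\le\|\Gamma-S_h\|_{\infty}$ enter) and deferring the concentration of $\|S_h(t)-\Gamma(t)\|_{\infty}$ to the proofs of the main theorems. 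Since your main argument treats that bound as a premise, as the lemma's wording permits, I regard this only as a difference in emphasis rather than a gap.
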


In the following, we state the theoretical properties of the eventually obtained estimators $\hat{U}(t)$.
A theoretical challenge is how to carefully control the $l_q (0 \le q \le 1)$ norm of our estimators obtained with a lasso-type penalty, which can be tackled by the consistent variable selection. 
Moreover, we need to deal with the dependence between observations from the same trajectory with care to control the concentration bound of the local linear estimator, which is essential to the theoretical results.
Next, we investigate the behavior of the resulting estimator under both irregular and common designs.

\subsection{Rate of convergence under the irregular design}\label{subsec:thm-indpt}

In this section, we provide a theoretical investigation of estimators under the irregular design. 
Assumption \ref{assump:indpt-sampling} is about the sampling scheme under the irregular design \citep{cai2011optimal}.
Assumption \ref{assump:indpt-m} on the sampling frequency is assumed to quantify the within-subject dependence and facilitate the exposition of theoretical analysis. This is a standard condition used in functional data \citep{zhang2016sparse}, which is fairly mild and holds for the common design and the irregular design with finite $m_i$ or when $m_i$ are not all vastly different, $1\le i \le n$.

\begin{assumption}\label{assump:indpt-sampling}
	Under the irregular design, $t_{il}, i=1,\dots,n; l=1,\dots,m_i$ are independent and identically distributed from a density $f(\cdot)$ with compact support $\tdomain$. In addition, the sampling density $f_{\tdomain}$ is bounded away form zero and infinity and is twice continuously differentiable with a bound derivative on its support. 
\end{assumption}

\begin{assumption}\label{assump:indpt-m}
	Assume $\lim \sup_n \sum_{i=1}^n m_i^2 / n\bar{m}^2 < \infty$ and $\sup_n (n\max_i m_i/\sum_{i=1}^n m_i) < \infty$.
\end{assumption}

\begin{theorem}\label{thm:indpt}
	Suppose that $U(t) \in \mathcal{U}(q, R_q; \tdomain)$ for $0 \le q \le 1$. Under Assumptions \ref{assump:eigengap}-\ref{assump:indpt-m}, for a fixed point $t\in [0,1]$, if $\rho_t = O\left[\{\log p/(n\bar{m}h) + \log p/n\}^{1/2} + h^2\right]$ and $\min_{j \in J(t)} \Pi_{jj}(t) \geq 2\gamma_t$, where $ \gamma_t^2 = O\left[\{\log p/(n\bar{m}h) + \log p/n\}^{1/2} + h^2\right]$, then
	\begin{equation*}
	d\{U(t), \hat{U}(t)\} = O_p\left[ \left\{\left(\frac{\log p}{n\bar{m}h}+ \frac{\log p}{n}\right)^{1/2} + h^2 \right\}^{1-q/2} \right].
	\end{equation*}
\end{theorem}

From Lemma \ref{lemma:support}, the condition $\min_{j \in J(t)} \Pi_{jj}(t) \geq 2\gamma_t$ together with the choice of $\gamma(t)$ in Theorem \ref{thm:indpt} ensures that signal variables can be distinguished from the noise, which leads to consistent variable selection. The parameter $\rho_t$ is to balance the trade-off between bias and variance.
The rate of convergence in Theorem \ref{thm:indpt} consists of two parts, the variance term $\left\{\log p/(n\bar{m}h)+ \log p/n\right\}^{1/2}$ and the bias term $h^2$ for $q=0$, which is consistent with that of the mean estimation in \citet{zhang2016sparse} up to the $\log p$ term accounting for high dimensionality. The convergence rate depends on $\bar{m}$ through the total number of observations $n\bar{m}$. Thus, the magnitude of $\bar{m}$ can be of any order of the sample size $n$ as long as $\log p/(nh) \to 0$ and $h \to 0$, which demonstrates the advantage of our proposal in handling the sparsely observed data.
A careful inspection shows that the convergence rate exhibits a phase transition phenomenon. When $\bar{m}h \to \infty$, the sampling frequency $\bar{m}$ has no effect on the resulting rate, $(\log p /n)^{1/2-q/4}$, as if the whole curves are completely observed. Otherwise, the estimates attain the nonparametric rate $\left[\left\{\log p/(n\bar{m}h)\right\}^{1/2} + h^2 \right]^{1-q/2}$ as if all $n\bar{m}$ observations are independently observed.

In contrast, the convergence rate is of the order $[\{\log p/(nh)\}^{1/2} + h^2 ]^{1-q}$ for the dynamic covariance estimation in \citet{chen2016dynamic} under the assumption that the columns of covariance matrices possess the $l_q$-type sparsity structure. There are two notable differences between the two rates. First, in our setting, the effective sample size is $(n\bar{m}h) \land n$ instead of $nh$ which differs from the conventional nonparametric scheme. This is because we take the correlation among observations from the same subject into account which is an important nature of functional data or generally the repeated measurements data.
Second, the dependence on the $q$ for these two convergence rates is different. 
In the static case, it is known that the optimal rate for eigenvector estimation $(\log p/n)^{1/2-q/4}$ is faster than the rate obtained for covariance estimation $(\log p/n)^{1/2-q/2}$ \citep{bickel2008covariance,vu2012minimax,cai2012minimax}. Likewise in the dynamic setting, the convergence rate for eigenvectors in Theorem \ref{thm:indpt} is faster than the rate of the corresponding covariance estimation.
The theoretical finding is corroborated in empirical studies that the eigenvector estimators based on the eigen-decomposition of dynamic covariance estimates perform sub-optimally.

Moreover, Lemma \ref{lemma:error_varinfty} reveals that $d\{U, \hat{U}\} \leq C_q\|\hat{\Sigma} - \Gamma\|_{\infty}^{1-q/2}$ where $C_q = Cd^2R_q$ for some positive constant $C>0$. Therefore, the quantity $R_0$ may be allowed to grow to infinity, and the consistency of the estimator is guaranteed as long as $R_0 \{\left(\log p/(nmh)+ \log p/n \right)^{1/2} + h^2 \} \to 0$. 
Given that the bandwidth is carefully tuned to balance the bias and variance, the quantity $\bar{m}$ plays a crucial role in the convergence rate, which is illustrated in Corollary \ref{corollary:indpt}.

\begin{corollary}\label{corollary:indpt}
	
	Suppose that $U(t) \in \mathcal{U}(q, R_q)$ and conditions in Theorem \ref{thm:indpt} hold and $t$ is a fixed point in $[0,1]$.
	
	(1) When $\bar{m}/(n/\log p)^{1/4} \to 0$ and $h = O[\{\log p/ (n\bar{m})\}^{1/5}]$,
	\[ d\{U(t), \hat{U}(t)\} = O_p\left[ \left\{\left(\frac{\log p}{n\bar{m}h}\right)^{1/2} + h^2 \right\}^{1-q/2} \right]. \]
	
	(2) When $\bar{m}/(n/\log p)^{1/4} \to C$, where $C >0$, and $h = O\{(\log p/ n)^{1/4}\}$,
	\[ d\{U(t), \hat{U}(t)\} = O_p\left\{ \left(\frac{\log p}{n}\right)^{1/2 - q/4}\right\} . \]
	
	(3) When $\bar{m}/(n/\log p)^{1/4} \to \infty$, $h = o\{(\log p/ n)^{1/4}\}$ and $\bar{m}h \to \infty$,
	\[ d\{U(t), \hat{U}(t)\} =  O_p\left\{ \left(\frac{\log p}{n}\right)^{1/2 - q/4}\right\}.\]
\end{corollary}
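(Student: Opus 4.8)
The plan is to read off the three regimes directly from the master rate established in \Cref{thm:indpt}, namely
\[
d\{\subspace(t),\hat{\subspace}(t)\}=O_p\left[\left\{\left(\frac{\log p}{n\bar{m}h}+\frac{\log p}{n}\right)^{1/2}+h^2\right\}^{1-q/2}\right],
\]
and to determine, for each prescribed bandwidth, which of the three competing terms dominates. The single quantity governing the comparison is the product $\bar{m}h$: since $\log p/(n\bar{m}h)=(\log p/n)\cdot(\bar{m}h)^{-1}$, the first variance term exceeds, matches, or is dominated by the second exactly according to whether $\bar{m}h\to 0$, $\bar{m}h\asymp 1$, or $\bar{m}h\to\infty$. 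I would therefore first evaluate $\bar{m}h$ and the squared bias $h^2$ under each case, then simplify the bracket and raise it to the power $1-q/2$.

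For case (1), with $h\asymp\{\log p/(n\bar{m})\}^{1/5}$ I would compute $\log p/(n\bar{m}h)\asymp\{\log p/(n\bar{m})\}^{4/5}$ and $h^2\asymp\{\log p/(n\bar{m})\}^{2/5}$, so that $\{\log p/(n\bar{m}h)\}^{1/2}\asymp h^2$; this is precisely the bias--variance balance that motivates the stated bandwidth. It then remains to discard the $\log p/n$ term. Using the regime assumption $\bar{m}\ll(n/\log p)^{1/4}$ I would bound $\bar{m}h\asymp\bar{m}^{4/5}(\log p/n)^{1/5}\ll(n/\log p)^{1/5}(\log p/n)^{1/5}=1$, whence $\log p/(n\bar{m}h)\gg\log p/n$ and the second variance term is negligible. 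The bracket is thus of order $\{\log p/(n\bar{m}h)\}^{1/2}+h^2$, giving the nonparametric rate.

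For cases (2) and (3) I would argue that the bracket collapses to order $(\log p/n)^{1/2}$. In case (2), $\bar{m}\asymp(n/\log p)^{1/4}$ together with $h\asymp(\log p/n)^{1/4}$ yields $\bar{m}h\asymp 1$, so $\log p/(n\bar{m}h)\asymp\log p/n$ and $h^2\asymp(\log p/n)^{1/2}$; hence all three terms are of the same order $(\log p/n)^{1/2}$. In case (3), the condition $\bar{m}h\to\infty$ forces $\log p/(n\bar{m}h)=o(\log p/n)$, while $h=o\{(\log p/n)^{1/4}\}$ forces $h^2=o\{(\log p/n)^{1/2}\}$, so the surviving term is again $(\log p/n)^{1/2}$. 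In either case the bracket is $\asymp(\log p/n)^{1/2}$, and raising to the power $1-q/2$ produces the parametric rate $(\log p/n)^{1/2-q/4}$.

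There is no substantive obstacle here: the corollary is pure bookkeeping on top of \Cref{thm:indpt}, and the only point requiring care is verifying the term-domination inequalities strictly, especially in the borderline case (2) where the two variance contributions and the squared bias are all exactly balanced and one must confirm that none is asymptotically larger than claimed.
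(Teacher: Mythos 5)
Your proposal is correct: the paper does not spell out a proof of this corollary, but its explicit proof of the common-design analogue (\Cref{corollary:common}) proceeds by exactly the same bookkeeping — balancing the variance term $\{\log p/(n\bar{m}h)\}^{1/2}$ against the bias $h^2$ and then comparing regimes via the size of $\bar{m}$ relative to $(n/\log p)^{1/4}$ — and your case-by-case term-domination argument (including the check that $\bar{m}h\ll 1$ in case (1) so the $\log p/n$ term is negligible) fills in the omitted details accurately.
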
 

As Corollary \ref{corollary:indpt} reveals, the phase transition occurs when $\bar{m}$ is of the order $(n/\log p)^{1/4}$.
When $\bar{m}$ is relatively small as in case (1), the nonparametric rate is determined jointly by quantities $n$ and $\bar{m}$. With $\bar{m}$ grows such that $\bar{m} \gtrsim (n/\log p)^{1/4}$, the rate achieves $(\log p/n)^{1/2-q/4}$ regardless of $\bar{m}$ which coincides with the optimal rate for estimating static eigenvectors. Although the rates are of the same order in cases (2) and (3) which fall into the parametric paradigm, the bias in case (2) is non-vanishing \citep{zhang2016sparse}.  
With the advantage of data pooling, the grids are allowed to be sparse under the irregular design as long as the sample size $n$ suffices. 

\subsection{Rate of convergence under the common design}\label{subsec:thm-common}

In this section, we focus on the common design where sampling locations $t_l, l=1, \dots, m$ are deterministic. 

\begin{assumption}\label{assump:common-sampling}
	Under the common design, $t_l$'s are fixed and distinct, and $\max_{0 \leq l \leq m} |t_{l+1} - t_l| \leq C m^{-1}$, where $t_0=0, t_{m+1}=1$.
\end{assumption}

\begin{assumption}\label{assump:mcommon}
	The sampling frequency $m \to \infty$ and $1/(mh) = O(1)$, $h\to 0$ as $n \to \infty$.
\end{assumption}

Under the common design, the data should be observed on sufficiently dense grids. To see this, if $m$ is finite, no data is available in the suitably small neighboring region for some $t$, which causes large bias for the resulting estimates.
Moreover, Assumption \ref{assump:mcommon} guarantees that $h \ge \min_{j=1,\dots, m} |t - t_j| = O(1/m)$ for each $t \in \tdomain$ to avoid the trivial estimator.

\begin{theorem}\label{thm:common} Suppose that $U(t) \in \mathcal{U}(q, R_q; \tdomain)$ for $0 \le q \le 1$. Under Assumptions \ref{assump:eigengap}-\ref{assump:kernel},\ref{assump:common-sampling} and \ref{assump:mcommon}, for a fixed point $t \in [0,1]$, if $\rho_t = O\left[\{\log p/(nmh) + \log p/n\}^{1/2} + h^2 \right]$ and $\min_{j \in J(t)} \Pi_{jj}(t) \geq 2\gamma_t$, where $ \gamma_t^2 = O\left[\{\log p/(nmh) + \log p/n\}^{1/2} + h^2 \right]$, then
	\begin{equation*}
	d\{U(t), \hat{U}(t)\} = O_p\left[ \left\{\left(\frac{\log p}{nmh}+ \frac{\log p}{n}\right)^{1/2} + h^2 \right\}^{1-q/2} \right].
	\end{equation*}
\end{theorem}

At first glance, the convergence rates under common and irregular designs are similar. However, since the data are observed at common locations under this design, the number of locations where the data are used for estimation is of the order $mh$. Consequently, the sampling frequency $m$ is required to be sufficiently large which differs from the case under the irregular design. The effect of $m$ on the convergence rate is illustrated in Corollary \ref{corollary:common}.

\begin{corollary}\label{corollary:common}
	Suppose that $U(t) \in \mathcal{U}(q, R_q)$ and conditions in Theorem \ref{thm:common} hold and $t$ is a fixed point in $[0,1]$.
	
	(1) When $m/(n/\log p)^{1/4} \to 0$ and $h = O(1/m)$,
	\[ d\{U(t), \hat{U}(t)\} = O_p\left\{ \left(\frac{1}{m^2}\right)^{1-q/2} \right\} . \]
	
	(2) When $m/(n/\log p)^{1/4} \to C$, where $C >0$, and $h = O\{(\log p/ n)^{1/4}\} = O(1/m)$,
	\[ d\{U(t), \hat{U}(t)\} =  O_p\left\{ \left(\frac{\log p}{n}\right)^{1/2 - q/4} \right\}. \]
	
	(3) When $m/(n/\log p)^{1/4} \to \infty$, $h = o\{(\log p/ n)^{1/4}\}$ and $mh \to \infty$, 
	\[ d\{U(t), \hat{U}(t)\} = O_p\left\{ \left(\frac{\log p}{n}\right)^{1/2 - q/4} \right\}.\]
\end{corollary}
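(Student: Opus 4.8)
The plan is to derive each of the three rates directly from the master bound in \Cref{thm:common} by substituting the prescribed bandwidth and then simplifying the bias--variance trade-off regime by regime. The starting point is the general rate
\[ d\{\subspace(t), \hat{\subspace}(t)\} = O_p\left[ \left\{\left(\frac{\log p}{nmh}+ \frac{\log p}{n}\right)^{1/2} + h^2 \right\}^{1-q/2} \right], \]
so the whole argument reduces to tracking the order of the quantity inside the braces.

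The key simplification I would exploit first is that, under \Cref{assump:mcommon}, the condition $1/(mh) = O(1)$ forces the variance term to collapse. Indeed, $\log p/(nmh) = (\log p/n)\cdot 1/(mh) = O(\log p/n)$, so that $\log p/(nmh) + \log p/n \asymp \log p/n$ in every regime of the common design. Consequently the bracketed quantity is always of order $\{(\log p/n)^{1/2} + h^2\}^{1-q/2}$, and the three cases are distinguished solely by whether the bias $h^2$ or the variance $(\log p/n)^{1/2}$ dominates.

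I would then treat each case as a short order comparison. In case (1), combining the stated $h = O(1/m)$ with the lower bound $h \geq c/m$ implied by $1/(mh)=O(1)$ gives $h \asymp 1/m$, hence $h^2 \asymp 1/m^2$; and the hypothesis $m/(n/\log p)^{1/4} \to 0$ is equivalent to $1/m^2 \gg (\log p/n)^{1/2}$, so the bias dominates and the rate becomes $(1/m^2)^{1-q/2}$. In case (2), $h \asymp (\log p/n)^{1/4}$ yields $h^2 \asymp (\log p/n)^{1/2}$, which matches the variance exactly, so the braced term is $\asymp (\log p/n)^{1/2}$ and the rate is $(\log p/n)^{1/2 - q/4}$. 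In case (3), $h = o\{(\log p/n)^{1/4}\}$ gives $h^2 = o\{(\log p/n)^{1/2}\}$, so the variance dominates and the rate is again $(\log p/n)^{1/2 - q/4}$.

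There is no substantive obstacle here, since \Cref{thm:common} already carries all the probabilistic content and the corollary is pure bookkeeping of orders. The only point requiring care is case (1), where one must use the prescribed bandwidth and \Cref{assump:mcommon} \emph{together} to pin down the two-sided relation $h \asymp 1/m$ rather than merely $h = O(1/m)$; this two-sided bound is precisely what makes the bias exactly of order $1/m^2$ and hence yields the sharp $(1/m^2)^{1-q/2}$ rate instead of only an upper bound.
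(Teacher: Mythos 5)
Your argument is correct and takes essentially the same route as the paper's own proof, which likewise substitutes the prescribed bandwidth into the master bound of \Cref{thm:common} and compares the bias $h^2$ against the variance $\{\log p/(nmh)+\log p/n\}^{1/2}$ under the constraint $1/(mh)=O(1)$ from \Cref{assump:mcommon}; your observation that this constraint collapses the variance term to $O(\log p/n)$ in every regime is exactly the bookkeeping the paper performs. The only cosmetic differences are that the paper organizes the cases by locating the unconstrained optimal bandwidth $\{\log p/(nm)\}^{1/5}$ relative to the floor $1/m$, and that the lower bound $h\gtrsim 1/m$ is really what controls the variance term (the corollary being an $O_p$ upper bound, $h=O(1/m)$ alone already suffices for the bias in case (1)).
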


The phase transition phenomenon appears more complex under the common design because of the interplay among the quantities $h$, $n$ and $m$. The bandwidth is usually chosen to be of the order $\{\log p/ (n\bar{m})\}^{1/5}$ to balance the associated bias and variance. Yet, restricted by the condition $1/(mh)=O(1)$ that guarantees observations available in the local window, the bandwidth is at least of the order $1/m$, see Corollary \ref{corollary:common}. When $m$ is relatively small, the sampling frequency $m$ plays a dominant role in the convergence rate, i.e., $(1/m^2)^{1-q/2}$, which is slower than that under the irregular design. When $m$ grows such that $m \gtrsim (n/\log p)^{1/4}$, the parametric rate $(\log p/n)^{1/2-q/4}$ can be achieved. 
Again, it requires $m \to \infty$ to guarantee the consistency, otherwise, the bias is not negligible.

In summary, the phase transition occurs at the same order, $m=O\{(n/\log p)^{1/4}\}$, for both designs.
When $m/(n/\log p)^{1/4} \to 0$, the rate depends on the total number of observations $n\bar{m}$ under the irregular case, while the rate is solely determined by the sampling frequency $m$ under the common case. Otherwise, both designs achieve parametric rates.
Under the irregular design, the estimator is consistent as long as $(\log p/{n\bar{m}h}) \lor (\log p/n) \to 0$. Thus, as the sample size permits, we can handle the extremely sparse case, that is, the sampling frequency is allowed to be very small. However, the quantity $m$ should be sufficiently large to achieve reasonable estimates under the common design. Further, when the sampling frequency is small, the irregular design is preferable to the common design with faster convergence rates.

\section{Simulation}\label{sec:sim}

In this section, several experiments are conducted to evaluate the numerical performance of our proposal under irregular and common designs. The observations are generated from the model,
$
\by_i(t_{il}) = \sum_{k=1}^{10} \xi_{ik}(t_{il}) \bu_{k}(t_{il}) + \boldsymbol{\epsilon}_{il}, i=1, \dots, n; l= 1, \dots, m_i,
$
where $\xi_{ik} \in \real, \by_i(t_{il}), \bu_{k}(t_{il}) \in \real^p$ and $\boldsymbol{\epsilon}_{il} \stackrel{iid}{\sim} N_p(0, \sigma^2I_p)$. Moreover, we set $\xi_{ik}(t_{il}) \equiv \xi_{ik} \stackrel{iid}{\sim} N(0, \lambda_k)$ where $\boldsymbol{\lambda} = (30,18,10,5,3,2,1,0.5,0.2,0.1)^{\t}$, which follows the classical generation mechanism of functional data.  
The sparse eigenvectors $\bu_{k}$ are obtained by applying Gram-Schmidt orthonormalization on $\mathbf{v}_{k}(t)$ defined later. Let $ v_{k,(k-1)\times 5+r}(t) = \phi_{r}(t), k=1,\dots,10, r=1,\dots,5$, and other entries of $\mathbf{v}_k$ be zero, where $v_{k,j}$ is the $j$-th element of the vector $\mathbf{v}_k$, $\phi_r(t)$ are functions in the Fourier basis, $\phi_r(t) = \sqrt{2} \sin(\pi (r+1) t)$ when $r$ is odd, $\phi_r(t) = \sqrt{2} \cos(\pi r t)$ when $r$ is even.
The locations of non-zero elements are different for different eigenvectors.

\begin{figure}[htbp]
	\centering
	\newcommand{\thisgap}{1.5mm}
	\newcommand{\thiswidth}{0.3\linewidth}
	\begin{tabular}{ccc}
		\hspace{\thisgap}\includegraphics[width=\thiswidth]{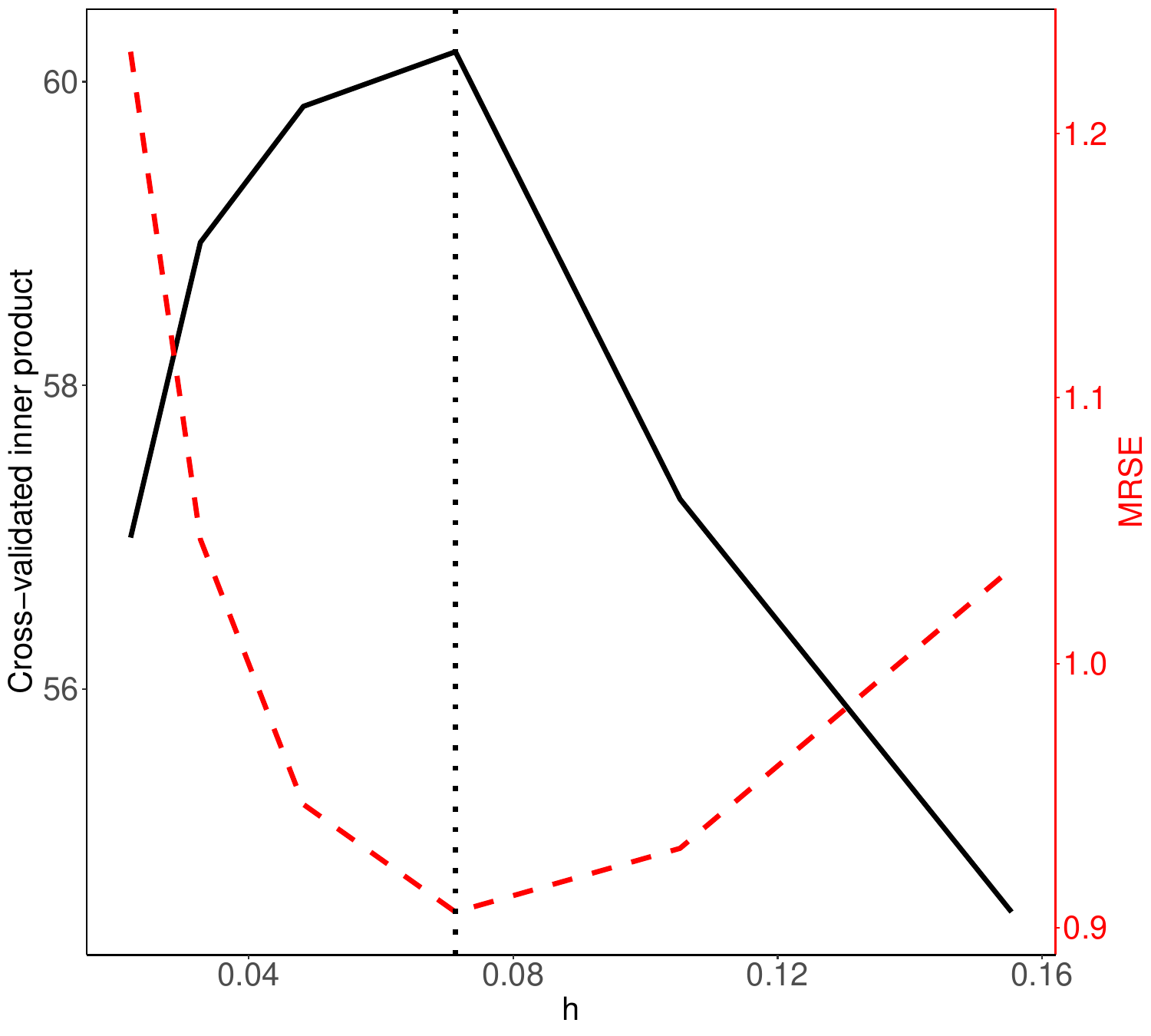} &
		\hspace{\thisgap}\includegraphics[width=\thiswidth]{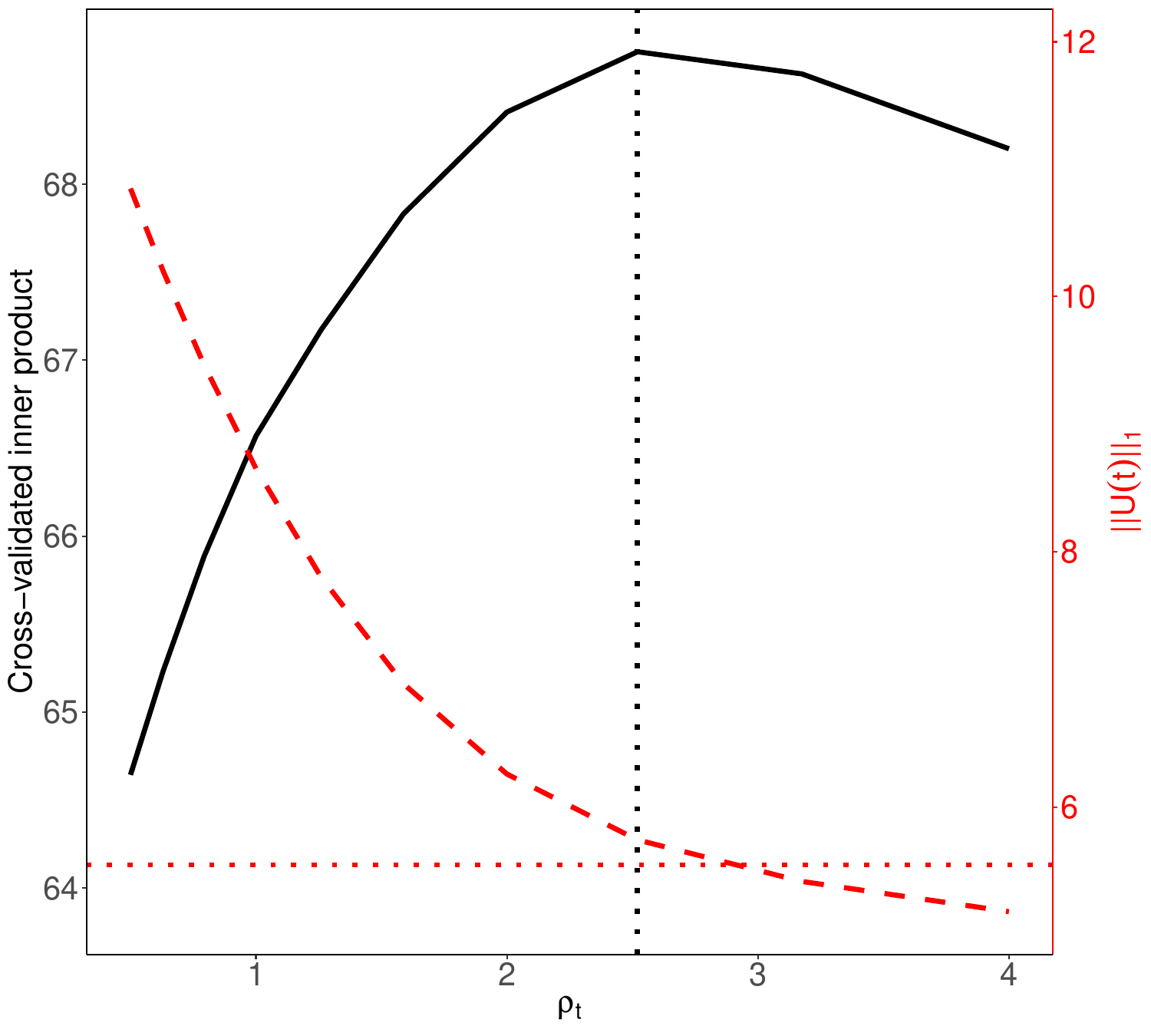} &
		\hspace{\thisgap}\includegraphics[width=\thiswidth]{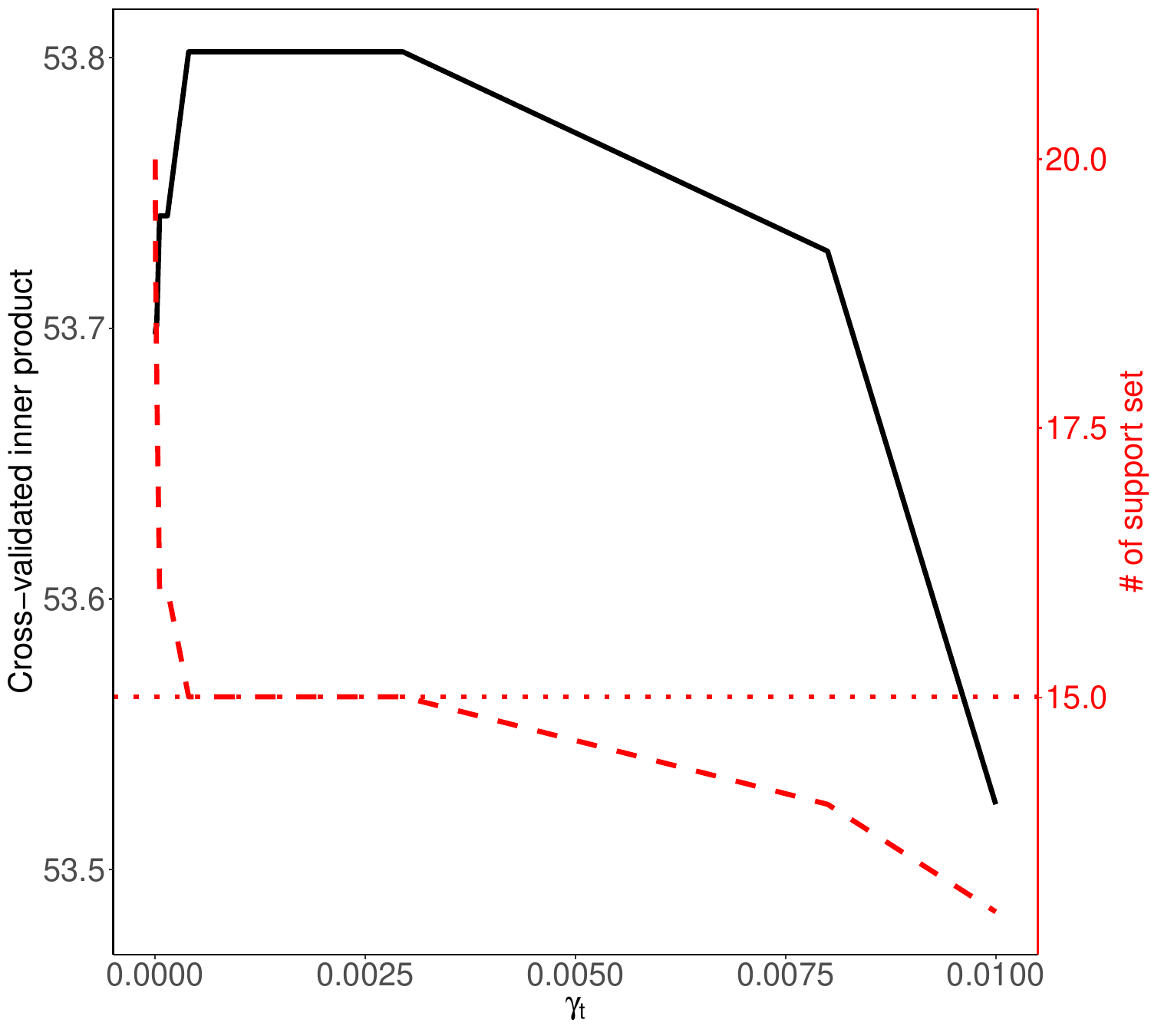} \\
	\end{tabular}
	\vspace{-2.5mm}
	\caption{The performance of the cross-validation to select parameters under the common design with $p=100, m=50$ and $\sigma^2=3$. All the parameters are selected by maximizing the cross-validated inner product (black, solid, left $y$ label). Left: bandwidth selection. The right $y$ label indicates the $MRSE(h) = m^{-1}\sum_{l=1}^m \|\hat{\Sigma}(t_l) - \Sigma(t_l)\|_F^2/\|\Sigma(t_l)\|_F^2$. An ideal $h$ should be close to the $h_{ora} = \arg\min MRSE(h)$. The selected bandwidth matches the one attaining the minimal MRSE (treated as a benchmark; see red, dashed line, right $y$ label) . Middle: sparsity parameter selection. For a fixed $t$, the right $y$ label represents $\|U(t)\|_1$. The $\|\cdot\|_1$ of estimated $U(t)$ with the selected $\rho_t$ (red, dashed, right $y$ label) approximately meets that of the true matrix (indicated by the horizontal dotted line). Right: thresholding parameter selection. The right $y$ label represents the size of the support set. The $\gamma_t$ is selected as the minimum value which maximizes the cross-validated inner product. The number of retained variables with the selected $\gamma_t$ (red, dashed, right $y$ label) is equal to the true number of relevant variables (horizontal, dotted).}
	\label{fig:cv}
\end{figure}

We design simulation settings to demonstrate the effect of the sample size $n$ and the sampling frequency under both common and irregular cases. 
Under the irregular design, we consider six settings for various combinations of $m_i$ and $n$ where $m_i$ are i.i.d from a discrete uniform distribution on the set $\mathcal M$. {\em Setting 1:} $n=100$ and $\bar{m}=100$, $\mathcal M = \{95, 100, 105\}$. {\em Setting 2:} $n=100$ and $\bar{m}=50$, $\mathcal M =\{45, 50, 55\}$. {\em Setting 3:} $n=100$ and $\bar{m}=20$, $\mathcal M = \{15, 20, 25\}$. {\em Setting 4:} $n=500$ and $\bar{m}=20$, $\mathcal M = \{19, 20, 21\}$. {\em Setting 5:} $n=500$ and $\bar{m}=10$, $\mathcal M = \{9, 10, 11\}$. {\em Setting 6:} $n=500$ and $\bar{m}=4$, $\mathcal M = \{3, 4, 5\}$.
The time points $t_{il}$ are i.i.d. sampled from the uniform distribution on [0,1]. Under the common design, the data are sampled at $t_{il} = (2l)/(2m+1)$ with the sample size $n=100$ and the sampling frequency $m=20, 50, 100$, respectively. In each setting, we repeat 100 times independently for $p=50, 100, 200$ and the noise level $\sigma^2=1,3$, respectively. 

For comparison purposes, we estimate eigenvectors by performing conventional PCA on sparse covariance matrices obtained by dynamic covariance models (DCM) \citep{chen2016dynamic}. Since the bandwidth selected by the leave-one-point-out cross-validation in the DCM might be inappropriate for repeated measurements, we use the leave-one-curve-out cross-validation instead and denote the resulting model by DCM+. 
Under the common design, we include the methods of \citet{berrendero2011principal} and \citet{johnstone2009consistency}, denoted by BJS and DT, respectively. Note that the parameters in DT are set to the recommended values in the original paper.
We evaluate the performance of estimators by the mean integrated squared error (MISE) which is approximated by computing the average of squared errors, defined in \eqref{eq:distance}, on a grid of 50 equally spaced points.
Since the BJS and DT only obtain estimates under the sampling locations, to calculate the MISE for these two methods, we simply generate data at evaluated grids to obtain corresponding estimates. Note that the resulting errors $\mathrm{MISE}_{BJS}$ and $\mathrm{MISE}_{DT}$ are not relevant to the sampling frequency.

We begin with illustrating the selection and the performance of tuning parameters in our method.
We set $d=3$ under which the FVE is about 85\%, and other parameters are chosen as discussed in Section \ref{subsec:tuning}. Specifically, the bandwidth $h$ is selected by leave-one-curve-out cross-validation, while the $\rho_t$ and $\gamma_t$ are determined by 5-fold cross-validation to save the computation time. 
Since the quantity $m$ or $\bar{m}$ might be large, to further reduce the computation, we randomly choose 10 observational points with equal probability for each curve to calculate the error in the validation step to tune the bandwidth. The effectiveness of the selection strategy for the parameters is illustrated in Figure \ref{fig:cv}.
As is shown, the selected parameters well depict the true smoothness of covariance matrices, the sparsity level and model complexity of the eigenvectors, respectively.

\begin{table}
	\caption{Average integrated squared errors and standard deviations over 100 replications for different settings under the irregular design and $\sigma^2=3$.}
	\centering
	\begin{tabular}{|c|c|c|c|c|c|}
		\hline
		\multicolumn{2}{|c|}{Model} & $\mathrm{MISE}_0$ & $\mathrm{MISE}$  & $\mathrm{MISE}_{DCM}$ & $\mathrm{MISE}_{DCM+}$ \\ \hline
		\multirow{3}{*}{\begin{tabular}[c]{@{}c@{}}$p$=100 \\ $n$=100 \end{tabular}} & $\bar{m}=100$   & 0.033 (0.014) & 0.031(0.012) & 0.040(0.037) & 0.038(0.031) \\ \cline{2-6}
		& $\bar{m}=50$ & 0.042(0.014) & 0.041(0.016) &  0.128(0.189) & 0.121(0.157) \\ \cline{2-6}
		\ & $\bar{m}=20$ & 0.112(0.056) &  0.102(0.061) & 0.488(0.237) & 0.498(0.226)  \\ \hline
		\multirow{2}{*}{\begin{tabular}[c]{@{}c@{}}$p$=100 \\ $n$=500 \end{tabular}} & $\bar{m}=20$ & 0.023(0.003)
		& 0.022(0.002)   & 0.021(0.006) & 0.022(0.005) \\ \cline{2-6}
		& $\bar{m}=10$ & 0.035(0.008)  & 0.034(0.007)  & 0.059(0.091)  & 0.080(0.130) \\ \cline{2-6}
		& $\bar{m}=4 $ & 0.078(0.016) & 0.068(0.016)  & 0.465(0.158) & 0.460(0.167)  \\ \hline \hline
		\multirow{3}{*}{\begin{tabular}[c]{@{}c@{}}$p$=200 \\ $n$=100 \end{tabular}} & $\bar{m}=100$  &
		0.038(0.023) & 0.036(0.022) & 0.127(0.168) & 0.130(0.177)  \\ \cline{2-6}
		& $\bar{m}=50$  & 0.058(0.037) & 0.059(0.044) & 0.361(0.263) & 0.322(0.275)  \\ \cline{2-6}
		\ &  $\bar{m}=20$  & 0.148(0.071)  & 0.123(0.079) & 0.577(0.193) & 0.574(0.184) \\ \hline
		\multirow{2}{*}{\begin{tabular}[c]{@{}c@{}}$p$=200 \\ $n$=500 \end{tabular}} & $\bar{m}=20$  &  0.024(0.002) & 0.023(0.002)   & 0.095(0.154) & 0.077(0.130)  \\ \cline{2-6}
		& $\bar{m}=10$ & 0.036(0.005) & 0.035(0.005)  & 0.288(0.206) & 0.269(0.214) \\ \cline{2-6}
		& $\bar{m}=4$ & 0.104(0.025)  & 0.080(0.031) & 0.515(0.087) & 0.515(0.086) \\ \hline
	\end{tabular} 
	\label{tab:indpt}
\end{table}

\begin{table}
	\caption{Average integrated squared errors and standard deviations over 100 replications for different settings under the common design and $\sigma^2=3$.}
	\centering
	\begin{tabular}{|c|c|c|c|c|c|c|c|}
		\hline
		\multicolumn{2}{|c|}{Model} & $\mathrm{MISE}_0$  & $\mathrm{MISE}$ & $\mathrm{MISE}_{DCM}$ & $\mathrm{MISE}_{DCM+}$ & $\mathrm{MISE}_{BJS}$ & $\mathrm{MISE}_{DT}$  \\ \hline
		\multirow{3}{*}{\begin{tabular}[c]{@{}c@{}}$p$=100 \\ $n$=100 \end{tabular}} & m=100  & 0.032(0.016)  & 0.030(0.016) & 0.064(0.096) & 0.058(0.065) & \multirow{3}{*}{\begin{tabular}[c]{@{}c@{}} 0.643 \\ (0.075) \end{tabular}} & \multirow{3}{*}{\begin{tabular}[c]{@{}c@{}} 0.647 \\ (0.099) \end{tabular}} \\ \cline{2-6}
		& m=50 & 0.040(0.014) & 0.037(0.013) & 0.212(0.177) & 0.199(0.159) & &  \\ \cline{2-6}
		& m=20  & 0.128(0.081) &  0.114(0.085) & 0.722(0.260) & 0.706(0.276) & & \\  \hline \hline
		\multirow{3}{*}{\begin{tabular}[c]{@{}c@{}}$p$=200 \\ $n$=100 \end{tabular}} & m=100 & 0.037(0.028)  & 0.036(0.025) & 0.242(0.275) & 0.180(0.219) & \multirow{3}{*}{\begin{tabular}[c]{@{}c@{}} 1.021 \\ (0.098) \end{tabular}} & \multirow{3}{*}{\begin{tabular}[c]{@{}c@{}} 0.645 \\ (0.118) \end{tabular}}  \\ \cline{2-6}
		& m=50  & 0.053(0.028) &  0.053(0.030) & 0.581(0.299) & 0.609(0.292) & &  \\ \cline{2-6}
		& m=20  & 0.157(0.063) &  0.132(0.074) & 1.026(0.125) & 1.018(0.147) &  & \\ \hline
	\end{tabular}
	\label{tab:common}
\end{table}

The results for $p$=100 and 200 with $\sigma^2=3$ are summarized in Tables \ref{tab:indpt} and \ref{tab:common}, while the results for $p=50$ are qualitatively similar, thus not reported for space economy. Moreover, the results with $\sigma^2=1$ are provided in the Supplementary Material. The errors of the initial estimate $\hat{U}^0$ and the refined estimate $\hat{U}$ are denoted by $\mathrm{MISE}_0$ and $\mathrm{MISE}$, respectively. 
As Tables \ref{tab:indpt} and \ref{tab:common} show, the proposed method outperforms other methods in all settings, while the refined estimators perform slightly better than the initial estimates under both designs. The DCM and DCM+ methods perform decently when $m=100$ and $p=100$. However, their performance deteriorates significantly if the dimension increases or the sampling frequency becomes small.  Under the irregular design, as noted in Section \ref{subsec:thm-indpt}, the grids can be sparse to obtain consistent estimation as the sample size permits. Thus, even in the very sparse case as $\bar{m}=4$ or 10 ($n=500$), the error is still well controlled and even smaller than that when $\bar{m}=20$ ($n=100$) due to a larger sample size and the advantage of data pooling.  
Moreover, when the total number of observations is comparable, a larger sample size usually leads to a better estimate.
Under the common design, it is not surprising that the BJS method fails to obtain reasonable estimators as it did not accommodate high dimensionality. Although the DT yields sparse eigenvectors, it tends to select too few coordinates, which introduces larger bias. Further, as shown in the left of Figure \ref{fig:raw-smootherror-TNR-TPR}, our method performs significantly better than other methods uniformly over $t$. The results under other simulation settings lead to similar conclusions and are not reported.

\begin{figure}[htbp]
	\centering                       
	\newcommand{\thisgap}{1.5mm}
	\newcommand{\thiswidth}{0.3\linewidth}
	\begin{tabular}{ccc}
		\hspace{\thisgap}\includegraphics[width=\thiswidth]{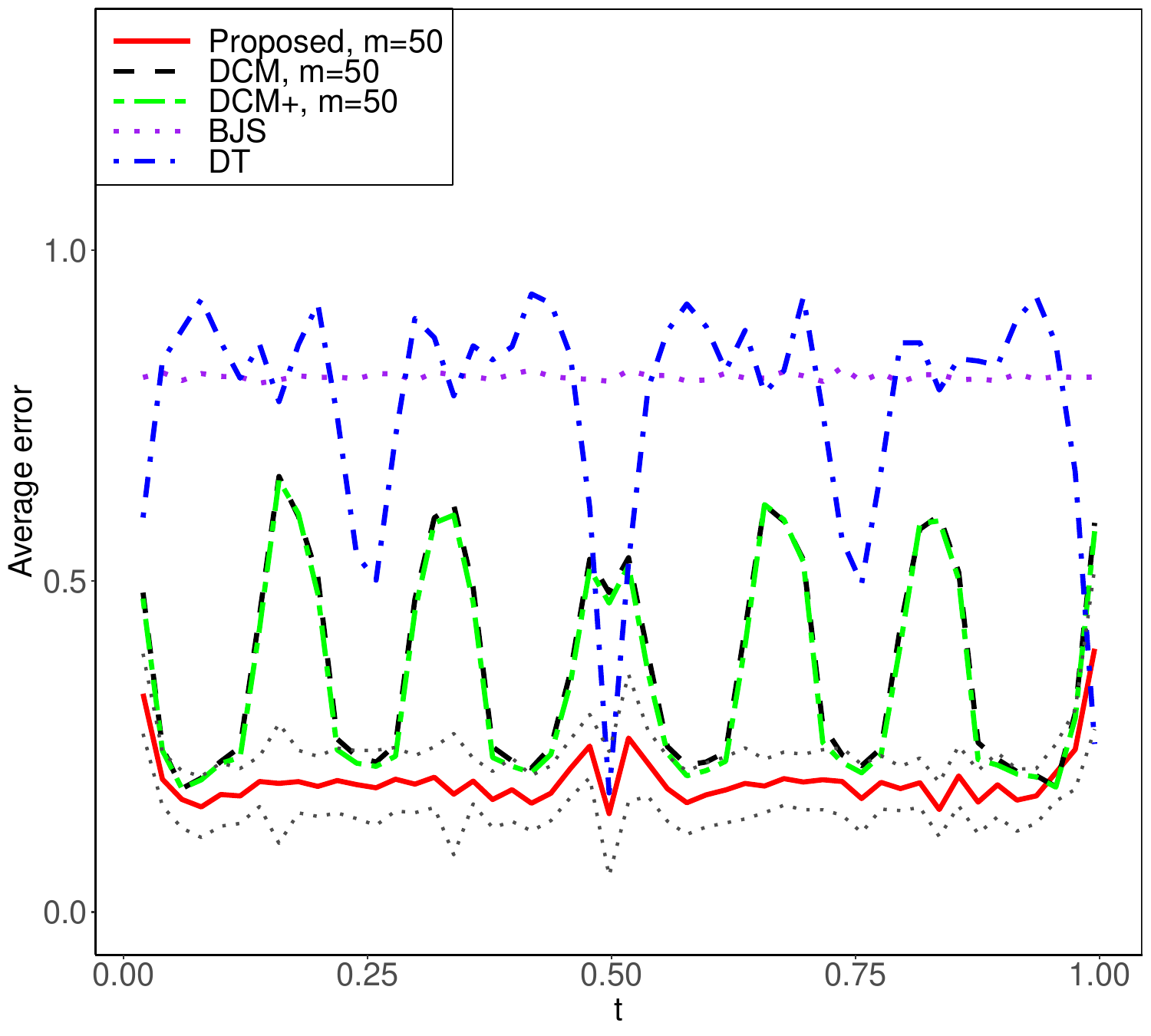} &
		\hspace{\thisgap}\includegraphics[width=\thiswidth]{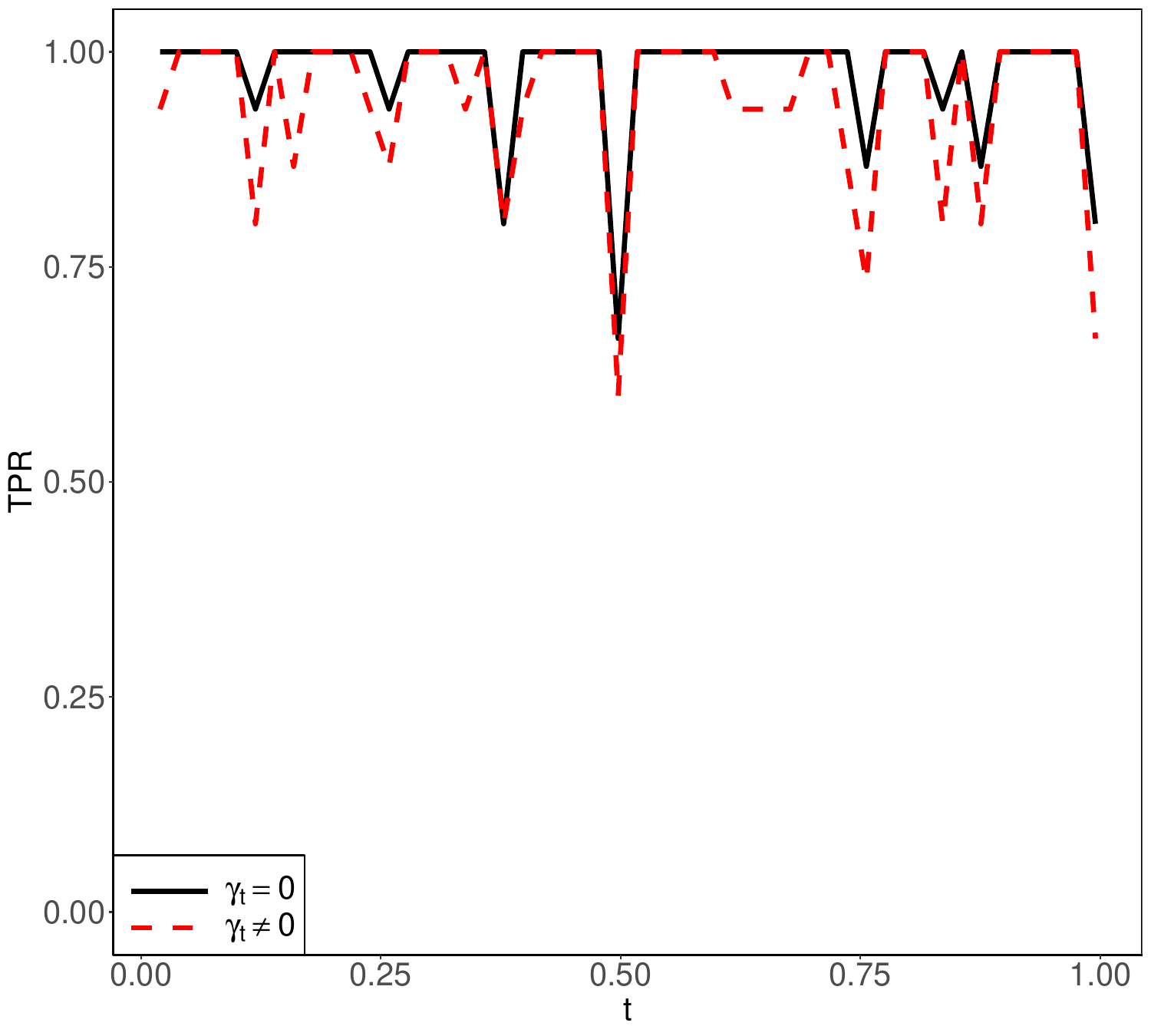} &
		\hspace{\thisgap}\includegraphics[width=\thiswidth]{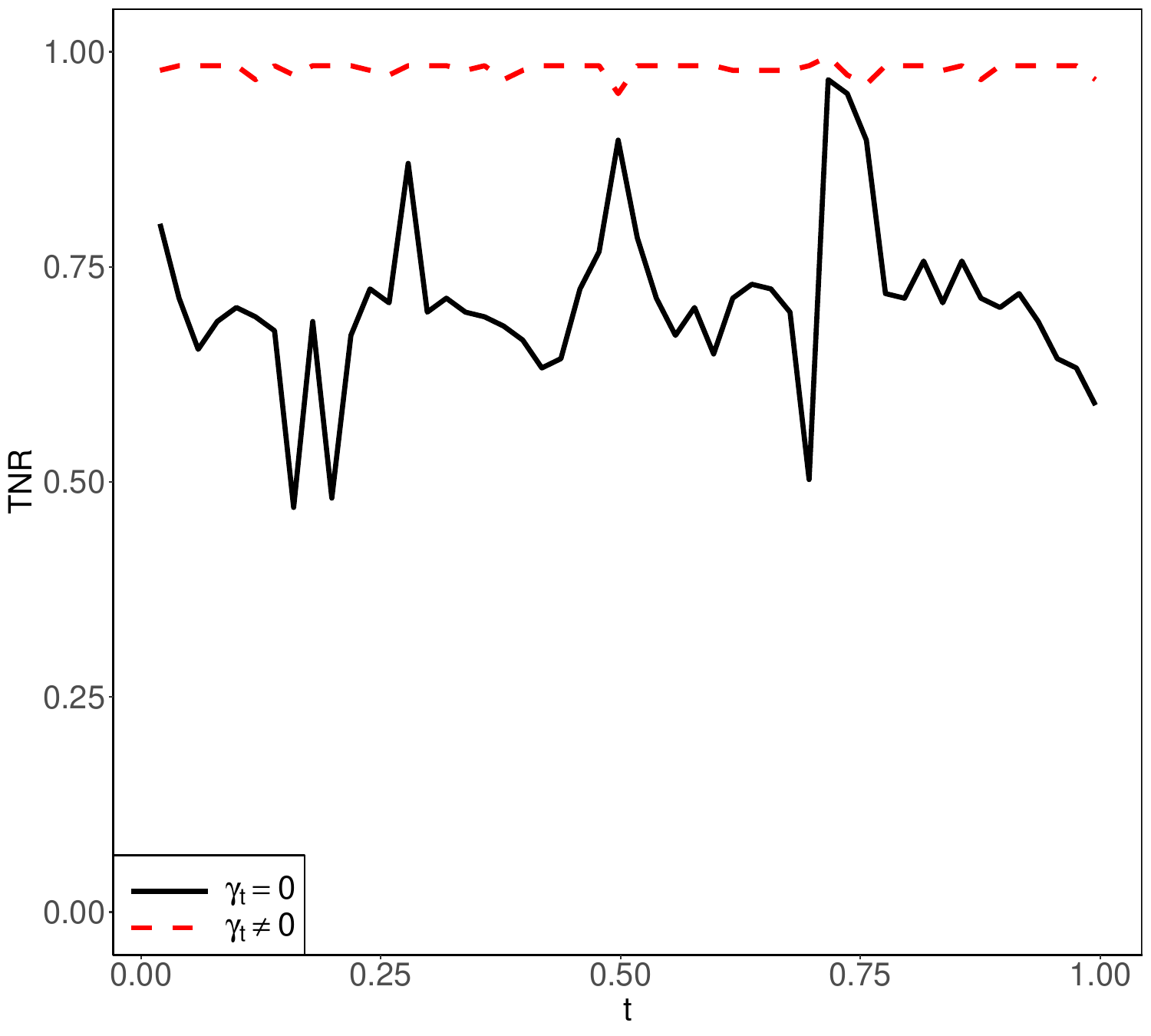} \\
	\end{tabular}
	\vspace{-3mm}
	\caption{The common design: $p=100$ and $\sigma^2=3$. Left: The errors, defined in \eqref{eq:distance}, of different methods over 50 equally spaced points in [0,1]. The two black dotted lines  reflect uncertainty (standard deviation) of errors for the proposed estimate. Right: The performance of TPR (middle) and TNR (right) over $t$ of the proposed estimates with $\gamma_t=0$ (solid) or $\gamma_t\neq 0$ (dashed). It illustrates the advantage of the refinement step in screening out irrelevant variables and achieving desired model parsimony, which improves TNR without decreasing TPR.}
	\label{fig:raw-smootherror-TNR-TPR}
\end{figure}

At last, to illustrate the performance of the refined estimates in achieving model parsimony, two criteria TNR = TN/(TN+FP) and TPR=TP/(TP+FN) are reported in Figure \ref{fig:raw-smootherror-TNR-TPR}, where TP and TN are abbreviations for true positives and true negatives, respectively, i.e., the number of significant or non-significant variables correctly identified by our method, similarly FP and FN stand for false positives and false negatives. 
The numerical values of $\hat{\Pi}_{jj}$ over $10^{-6}$ in magnitude are considered nonzero for the consideration of computation accuracy.
As the principal eigenvectors vary with $t$, the signal $\min_{j\in J(t)} \Pi_{jj}(t)$ may be close to zero which leads to a bit lower TPR at some $t$. 
The result reveals the fact that the refined estimate by thresholding yields better variable selection results, which is particularly useful for model interpretation.

\section{Real data example }\label{sec:realdata}

The heartbeat sound dataset from \url{http://www.timeseriesclassification.com/description.php?Dataset=Heartbeat} were sourced from several contributors around the world, collected at either a clinical or nonclinical environment \citep{liu2016open}.
The heart sound recordings were collected from different locations on the body. The typical four locations are the aortic area, pulmonic area, tricuspid area and mitral area, but could be one of nine different locations. 
Each recording was truncated to 5 seconds. A Spectrogram of each instance was then created with a window size of 0.061 seconds and an overlap of 70\%.
Each instance is arranged such that each dimension is a frequency band from the spectrogram, with $p=61$ and $m=405$.
We focus on extracting the dynamic features of $n=295$ pathological patients.

\begin{table}
	\centering
	\caption{\label{tab:approx} The mean squared recovery errors on the test data with a different number of principal eigenvectors/eigenfunctions for different methods.}
	\begin{tabular}{|c|c|c|c|c|c|c|c|c|}
		\hline
		& $ \mathrm{Proposed}_{0}$ & Proposed & DCM & DCM+ & BJS & DT & Rough & FPCA \\ 
		\hline
		$d=5$ & 0.201 & 0.201 & 0.274 & 0.222 & 0.279 & 0.302 & 0.273 & 3.156 \\ \hline
		$d=6$ & 0.113 & 0.112 & 0.218 & 0.151 & 0.192 & 0.228 & 0.191 & 3.120 \\ \hline
		$d=7$ & 0.072 & 0.072 & 0.187 & 0.114 & 0.132 & 0.184 & 0.131 & 3.078 \\ \hline
		$d=8$ & 0.045 & 0.045 & 0.165 & 0.097 & 0.091 & 0.157 & 0.093 & 3.044 \\ 
		\hline
	\end{tabular}
\end{table}

For the purpose of evaluation, we compute the mean squared recovery error on the held-out test sample. Specifically, we randomly choose 100 subjects as the test data, and treat the remaining data as the training set. The estimators are obtained using the training sample under different methods including the proposed method, DCM, DCM+, BJS and DT. 
To better demonstrate the advantages of smoothing, we compare with a variant of our method with the bandwidth nearly 0, denoted by the rough estimator. More specifically, this variant is obtained by using the same optimization technique in our paper and only replacing the smoothed covariance matrix with the sample covariance matrix. 
The mean squared recovery errors under a different number of eigenvectors $d$ are calculated over all 405 grids, that is, $(100m)^{-1}\sum_{i,l}\|y_{il} - \hat{\mu}_l - \hat{U}_l\hat{U}_l^{\t} (y_{il} - \hat{\mu}_l) \|^2 $, where $y_{il}$ are observations of the $i$-th test subject, $\hat{\mu}_l$ is the sample mean vector and $\hat{U}_l$ is the estimator at the $l$-th observed locations. 
Moreover, to compare with the performance of FPCA \citep{ramsay2005} in terms of the low-dimensional representation, we implement FPCA for each functional variable and calculate the mean squared recovery error, $(100m)^{-1}\sum_{i,j,l} (y_{ijl} - \hat{\mu}_{jl} - \sum_{k=1}^d \hat{\xi}_{jk} \hat{\psi}_{jkl} )^2 $, where $\hat{\xi}_{jl} = m^{-1}\sum_{l=1}^m (y_{ijl} - \hat{\mu}_{jl}) \hat{\psi}_{jkl} $ and $\hat{\psi}_{jkl}$ is the value at the $l$-th time point of the $k$-th eigenfunction for the $j$-th functional variable.
Note that $\mathrm{Proposed}_0$ and Proposed represent the proposed method without and with the refinement step, respectively.
While the refinement step has little effect on the recovery errors, it in fact leads to a more parsimonious model with fewer retained variables, screening out about 20\% insignificant variables. As Table \ref{tab:approx} shows, the proposed method obtains favorable performance over other methods, suggesting more accurate estimation from our approach for the dynamic PCA. In particular, our approach outperforms the rough estimator, showing the usefulness of the smoothing strategy.
The DCM performs worse than DCM+ because it tends to select smaller bandwidth which is not satisfactory in this case.
As seen in Table \ref{tab:approx}, the representation obtained by FPCA is not promising with large recovery errors.

\begin{table}
	\centering
	\caption{\label{tab:approx-irregular} The mean squared recovery errors on the test data for different methods under the irregular case with different $\tilde m$ and $d=6, 8$.}
	\begin{tabular}{|c|c|c|c|c|c|c|c|c|}
		\hline
		\multirow{2}{*}{}& \multicolumn{4}{|c|}{$d=6$} & \multicolumn{4}{|c|}{$d=8$} \\ \cline{2-9}
		& $ \mathrm{Proposed}_{0}$ & Proposed & DCM & DCM+ & $ \mathrm{Proposed}_{0}$ & Proposed & DCM & DCM+ \\ \hline
		$\tilde m=30$ & 0.191 & 0.191 & 0.226 & 0.232 & 0.124 & 0.124 & 0.151 & 0.162 \\ \hline
		$\tilde m=50$  & 0.150 & 0.150 & 0.174 & 0.202 & 0.072 & 0.072 & 0.120 & 0.150 \\ \hline
		$\tilde m=80$ & 0.147 & 0.147 & 0.176 & 0.175 & 0.068 & 0.068 & 0.116 & 0.114 \\ 
		\hline
	\end{tabular}
\end{table}

To demonstrate the performance under the irregular design, we randomly sample $\tilde{m}=30, 50, 80$ measurements, respectively, with equal probability from each subject of the training sample, and use the obtained irregular data for estimation.
Since the methods BJS and DT are not feasible for the irregular design, we compare the recovery errors of the other three approaches for the dynamic PCA. Note that we report the results with $d=6, 8$ in Table \ref{tab:approx-irregular} for space economy, since the results exhibit a similar pattern for other values of $d$. 
It is demonstrated that the proposed method is capable of producing more desirable estimates. Moreover, despite fewer observations, the recovery errors in Table \ref{tab:approx-irregular} are still much lower than the errors of FPCA obtained under the common design.

\section{Concluding remarks}

We propose a unified framework to estimate dynamic eigenvectors in high-dimensional settings by combining the local linear smoothing and the sparsity constraint under both common and irregular designs. The resulting estimators satisfy sparsity and orthogonality simultaneously. Different from the conventional nonparametric smoothing, the rates of convergence depend on the sampling frequency and the sample size jointly, exhibiting the phase transition phenomenon. When the sampling frequency is suitably large, the obtained rates are optimal as if the whole curves are available under both designs. Otherwise, the irregular design is preferred with a faster rate of convergence. 

It is interesting to study other types of smoothing techniques, such as smoothing splines, for the problem of DPCA. Moreover, since the PCA is sensitive to outliers, it is also useful to develop a dynamic robust model. These topics are beyond the scope of the current paper and deserve future study.

\section*{Appendix}
\begin{appendix}
\setcounter{equation}{0}
\renewcommand{\theequation}{\Alph{section}.\arabic{equation}}

\section{Auxiliary lemmas}\label{apx:proofs}

In the sequel, we suppress the index $t$ of pointwise results for convenience when no ambiguity arises. 
We write $a \asymp b$ if $a \lesssim b$ and $b \lesssim a$ hold simultaneously.

\begin{lemma}\label{lem:dpca}
	Given the random function $\bX(t) \in \real^p$ with the mean function $\bmu(t) = E \bX(t)$, $t \in \tdomain$. Denote $U(t) = (\bu_1(t), \dots, \bu_d(t))$, where $\bu_1(t), \dots, \bu_d(t)$ are the first $d$ eigenvectors of $\bX(t)$. Then, $U(t)$ is a solution of the optimization problem \eqref{opt:dpca}.
	Moreover, solving \eqref{opt:dpca} is reduced to performing multivariate PCA at each $t$.
\end{lemma}

\begin{proof}
	Let $h\big(V(t)\big) = E\|\bX(t) - \bmu(t) - V(t)V(t)^{\t}\{\bX(t)-\bmu(t)\}\|^2$. We first show that $U(t)$ is a solution of \eqref{opt:dpca}, and define the optimization problem for multivariate PCA at each $t$ as,
	\begin{eqnarray}\label{eq:pca}
	\min_{V(t)} & h\big(V(t)\big) \nonumber\\
	s.t. & V(t)^{\t} V(t) = I_d.
	\end{eqnarray}
	Denote by $U^*(t)$ the solution of \eqref{eq:pca}, and let $h^*(t) = h\big( U^*(t) \big)$. Moreover, we denote the solution of \eqref{opt:dpca} by $U^{**}(t)$, and let $h^{**}(t) = h\big( U^{**}(t) \big)$. Thus, we have $\int_{\tdomain} \{h^*(t) - h^{**}(t) \} dt \le 0 $. Since $U^*(t)$ is feasible for \eqref{opt:dpca}, then $\int_{\tdomain} h^{**}(t) dt \le \int_{\tdomain} h^*(t)dt$.
	Due to the fact that $\int_{\tdomain} h^*(t) dt = \int_{\tdomain} h^{**}(t) dt$, we conclude that $U^*(t)$ is a solution of \eqref{opt:dpca}. Note that (A.1) seeks an orthonormal matrix $V(t) \in \real^{p \times d}$ to minimize $E\|\bX(t) - \bmu(t) - V(t)V(t)^{\t}\{\bX(t)-\bmu(t)\}\|^2$, which is equivalent to finding an orthonormal matrix to maximize $\mathrm{Tr}(V(t)^{\t}\Sigma(t)V(t))$. Thus $U(t)$ is a solution of \eqref{eq:pca}, then it is also a solution of \eqref{opt:dpca}.  
	
	Next, we show that \eqref{opt:dpca} is reduced to the multivariate PCA at each $t$. 
	If $U^{**}(t)$ is not the solution of \eqref{eq:pca}, then there exists some $\tilde{t} \in \tdomain$ such that the columns of $U^{**}(\tilde{t})$ does not correspond to the first $d$ eigenvectors of $\bX(\tilde{t})$. It contradicts the argument that $U^{**}(t)$ is a solution of \eqref{opt:dpca} since replacing $U^{**}(\tilde{t})$ with $U(\tilde{t})$ leads to a smaller objective value for \eqref{opt:dpca}.
\end{proof}

\begin{lemma}\label{lemma:error_varinfty}
	Recall that $U$ and $\hat{U}$ are true and estimated principal eigenvectors with projection matrices $\Pi=UU^{\t}$ and $\hat{\Pi}=\hat{U}\hat{U}^{\t}$, respectively. Assume $U \in \mathcal{U}(q, R_q)$, $0 \le q \le 1$. If conditions in Lemma \ref{lemma:support} hold, with appropriate choice of parameters, we have $d\{U, \hat{U}\} \leq C_q\|\hat{\Sigma} - \Gamma\|_{\infty}^{1-q/2}$ where $C_q = Cd^2R_q$ for some positive constant $C>0$.
\end{lemma}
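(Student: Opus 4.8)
The plan is to translate $d\{\subspace,\hat\subspace\}=\tfrac{1}{\sqrt2}\|\Pi-\hat\Pi\|_F$ into a perturbation bound for projection matrices and to play the elementwise control $\|S_h-\Gamma\|_\infty$ against the $l_q$ sparsity. First I would invoke \Cref{lemma:support}: under its hypotheses the thresholding step recovers the support exactly, $\hat J=J$, so both $\Pi$ and $\hat\Pi$ vanish outside the block $J\times J$ and every matrix inner product below reduces to that block. Since $U$ has zero rows outside $J$, the relation $\Gamma U=U\Lambda$ restricts to $\Gamma_{JJ}U_J=U_J\Lambda$, so $U_J$ spans the top-$d$ eigenspace of $\Gamma_{JJ}$; by eigenvalue interlacing the gap $\delta:=\lambda_d-\lambda_{d+1}>0$ is preserved. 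It therefore suffices to bound $\|\hat\Pi-\Pi\|_F$ for the reduced problem on $J$.

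Second, I would derive a basic inequality. Because $\hat U$ maximizes $\mathrm{Tr}(U^{\t}S_{h,JJ}U)-\rho\|U\|_1$ over the restricted Stiefel manifold and the true $U_J$ is feasible there, optimality gives $\mathrm{Tr}\{(\hat\Pi-\Pi)S_h\}\ge\rho(\|\hat U\|_1-\|U\|_1)$. Separately, the curvature of the linear map $H\mapsto\mathrm{Tr}(\Gamma H)$ at its Fantope maximizer $\Pi$ yields $\mathrm{Tr}\{(\Pi-\hat\Pi)\Gamma\}\ge\tfrac{\delta}{2}\|\Pi-\hat\Pi\|_F^2$; this is the Davis--Kahan mechanism and is valid because the rank-$d$ projection $\hat\Pi$ lies in the Fantope $\{0\preceq H\preceq I,\ \mathrm{Tr}\,H=d\}$. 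Writing $S_h=\Gamma+(S_h-\Gamma)$ and adding the two displays gives
\[
\frac{\delta}{2}\|\Pi-\hat\Pi\|_F^2\le \mathrm{Tr}\{(\hat\Pi-\Pi)(S_h-\Gamma)\}+\rho\big(\|U\|_1-\|\hat U\|_1\big).
\]

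Third --- and this is the crux --- I would bound the right-hand side. Elementwise Hölder duality gives $\mathrm{Tr}\{(\hat\Pi-\Pi)(S_h-\Gamma)\}\le\|S_h-\Gamma\|_\infty\|\hat\Pi-\Pi\|_1$, and since $\rho\asymp\|S_h-\Gamma\|_\infty$ the penalty term has the same shape. To extract the exponent I would pass the $l_q$ budget on the columns of $U$ to the diagonal of the projection: by sub-additivity of $x\mapsto x^{q/2}$,
\[
\sum_j \Pi_{jj}^{q/2}=\sum_j\Big(\sum_{k=1}^d u_{kj}^2\Big)^{q/2}\le \sum_{k=1}^d\|\bu_k\|_q^q\le dR_q,
\]
which quantifies the effective sparsity. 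A thresholding/peeling argument at level $\asymp\|S_h-\Gamma\|_\infty$ (as in the sparse-PCA literature) then trades $\|\hat\Pi-\Pi\|_1$ for $\|\hat\Pi-\Pi\|_F$, producing a self-improving inequality whose leading term is
\[
\|\Pi-\hat\Pi\|_F^2\lesssim \|S_h-\Gamma\|_\infty^{1-q/2}\,\|\Pi-\hat\Pi\|_F,
\]
so that $\|\Pi-\hat\Pi\|_F\lesssim\|S_h-\Gamma\|_\infty^{1-q/2}$ and the claim follows after multiplying by $1/\sqrt2$.

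The main obstacle is precisely this sparsity conversion: obtaining the sharp exponent $1-q/2$ from the $l_q$ constraint while keeping the $l_1$-penalty contribution sub-dominant. Controlling the penalty requires a $\sin\Theta$-type bound $\|\hat U-UO\|_F\lesssim\|\Pi-\hat\Pi\|_F$ for some orthogonal $O$, after which, with $\rho$ and $\gamma$ calibrated as in \Cref{thm:indpt,thm:common}, one checks that $\rho(\|U\|_1-\|\hat U\|_1)$ enters at order $\|S_h-\Gamma\|_\infty^{1-q/2}$ or lower --- this is where support recovery and the choice of $\rho$ are used in tandem. The curvature lemma and the Hölder step are otherwise routine.
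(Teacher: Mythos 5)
Your proposal follows essentially the same route as the paper's proof: support recovery via \Cref{lemma:support} reduces everything to the block $J$, the Vu--Lei curvature/optimality inequality (their Corollary 4.1) gives the basic bound, and elementwise H\"older duality combined with the $l_1$--$l_2$--$l_q$ interpolation at threshold $\tau \asymp \|S_h-\Gamma\|_{\infty}/\delta$ produces the exponent $1-q/2$ through the same self-improving quadratic inequality in $\|\Pi-\hat{\Pi}\|_F$. The one step you leave schematic --- bounding the $l_q$ mass of the \emph{estimator} $\hat{U}_J\hat{U}_J^{\t}-U_JU_J^{\t}$, which the paper handles by a case analysis over $q=0$, $q=1$, and $0<q<1$ using $\min_{j\in J}\Pi_{jj}\ge 2\gamma$ together with the support lemma --- is exactly the technical crux you correctly identify, so the approaches coincide.
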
 

\begin{lemma}\label{lemma:w_R_rate}
	If $h\to 0$ and $n\bar{m}h \to \infty$, then under the irregular design, we have
	\begin{itemize}
		\item[(a)] $R_{\ell} \asymp n\bar{m}h^{\ell}\big(1+o_p(1)\big)$, $\ell=0, 1, 2$. Moreover, $R_2R_0 - R_1^2 \asymp n^2 \bar{m}^2 h^2 \big (1+o_p(1)\big)$.
		\item[(b)] $E\left[\left\{ R_2K_h(t_{il}-t) - R_1K_h(t_{il}-t)(t_{il}-t)\right\} x_{ijl}x_{ikl} \right]^2 = O(n^2\bar{m}^2h^3)$.
		\item[(c)] $E \left( \tilde{w}_{il} x_{ijl}x_{ikl}  \tilde{w}_{i'l'} x_{i'jl'}x_{i'kl'} \right) = O(n^2\bar{m}^2h^4) $ for $(i,l) \neq (i', l')$, where $\tilde{w}_{il} = R_2K_h(t_{il}-t) - R_1K_h(t_{il}-t)(t_{il}-t)$.
	\end{itemize}
\end{lemma}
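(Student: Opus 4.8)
The plan is to treat the three claims separately, each reducing to a moment computation for kernel-weighted sums, with the key difficulty being the dependence between the random normalizers $R_1,R_2$ and the observations (which share the sampling times), handled throughout by a self-term cancellation.

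For part (a) I would exploit that under \Cref{assump:indpt-sampling} the $t_{il}$ are i.i.d.\ from $f$, so $R_\ell$ is a sum of $N=n\bar{m}$ i.i.d.\ terms. The substitution $u=(s-t)/h$ gives $E[K_h(t_{il}-t)(t_{il}-t)^\ell]=h^\ell\int u^\ell K(u)f(t+uh)\,du=O(h^\ell)$, using $f$ bounded and, for $\ell=1$, the symmetry $\int uK=0$ from \Cref{assump:kernel}; hence $E R_\ell=O(n\bar{m}h^\ell)$. The same substitution yields $E[K_h^2(t_{il}-t)(t_{il}-t)^{2\ell}]=O(h^{2\ell-1})$, so $\mathrm{Var}(R_\ell)=O(n\bar{m}h^{2\ell-1})$. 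Since $n\bar{m}h\to\infty$, the standard deviation $O((n\bar{m})^{1/2}h^{\ell-1/2})$ is of smaller order than $n\bar{m}h^\ell$, and Chebyshev gives $R_\ell=O_p(n\bar{m}h^\ell)$.

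For part (b) the obstacle is that $R_1,R_2$ both contain the $(i,l)$-indexed term, so $\tilde w_{il}$ is not independent of $(t_{il},x_{ijl},x_{ikl})$. I would split $R_\ell=K_h(t_{il}-t)(t_{il}-t)^\ell+R_\ell^{(-il)}$ and observe that the self-contributions cancel exactly, giving $\tilde w_{il}=K_h(t_{il}-t)\{R_2^{(-il)}-R_1^{(-il)}(t_{il}-t)\}$ with $R_\ell^{(-il)}$ independent of the $(i,l)$ data. Bounding $\{R_2^{(-il)}-R_1^{(-il)}(t_{il}-t)\}^2\le 2(R_2^{(-il)})^2+2(R_1^{(-il)})^2(t_{il}-t)^2$ and factoring by independence, I would use $E[(R_\ell^{(-il)})^2]=O(n^2\bar{m}^2h^4)$ (from the mean and variance bounds of part (a)), together with $E[K_h^2(t_{il}-t)x_{ijl}^2x_{ikl}^2]=O(h^{-1})$ and $E[K_h^2(t_{il}-t)(t_{il}-t)^2x_{ijl}^2x_{ikl}^2]=O(h)$, where the uniform boundedness of $E[X_j^2(s)X_k^2(s)]$ follows from \Cref{assump:tail} via Cauchy--Schwarz. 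The dominant contribution is $O(n^2\bar{m}^2h^4)\cdot O(h^{-1})=O(n^2\bar{m}^2h^3)$.

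Part (c) proceeds analogously, but now I pull both the $(i,l)$ and $(i',l')$ terms out of $R_1,R_2$, writing $R_\ell=K_h(t_{il}-t)(t_{il}-t)^\ell+K_h(t_{i'l'}-t)(t_{i'l'}-t)^\ell+G_\ell$ with $G_\ell:=R_\ell^{(-il,-i'l')}$ independent of both observations. The self-terms again cancel and the surviving cross-terms are $O(h^2)$-small, so at leading order $\tilde w_{il}\tilde w_{i'l'}\approx K_h(t_{il}-t)K_h(t_{i'l'}-t)\{G_2-G_1(t_{il}-t)\}\{G_2-G_1(t_{i'l'}-t)\}$. Taking expectation first over $(G_1,G_2)$ leaves $E[G_2^2]=O(n^2\bar{m}^2h^4)$ as the leading factor, multiplied by $E[K_h(t_{il}-t)K_h(t_{i'l'}-t)x_{ijl}x_{ikl}x_{i'jl'}x_{i'kl'}]$. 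The crux, and the step I expect to be the real obstacle, is bounding this last expectation by $O(1)$ in the dependent case $i=i'$, where $X_{ij}(t_{il})$ and $X_{ij}(t_{il'})$ are evaluations of the same trajectory: I would condition on the sampling times, use $t_{il}\perp t_{il'}$ to decouple the two kernel integrals $\int K_h(s-t)\cdots ds$ (each $O(1)$), and invoke the uniform boundedness of the fourth-order moment $E[X_{ij}(s)X_{ik}(s)X_{ij}(s')X_{ik}(s')]$ from \Cref{assump:tail}. Because each $K_h$ now appears only to the first power, no $h^{-1}$ factor is generated and the bound is $O(n^2\bar{m}^2h^4)$.
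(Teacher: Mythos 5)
Your proposal is correct and rests on the same moment computations as the paper's proof: the substitution $u=(s-t)/h$ giving $E R_\ell = O(n\bar{m}h^{\ell})$ and $\mathrm{var}(R_\ell)=O(n\bar{m}h^{2\ell-1})$ for (a), and for (b)--(c) an expansion of $R_\ell R_{\ell'}$ into diagonal and off-diagonal sums whose dominant contribution is the off-diagonal product of means, yielding $O(n^2\bar{m}^2h^3)$ and $O(n^2\bar{m}^2h^4)$ respectively. Where you differ is in bookkeeping: the paper expands $R_2^2$ (resp.\ $R_\ell R_{\ell'}$) directly inside the expectation and reads off orders term by term, whereas you first perform the exact self-term cancellation $\tilde{w}_{il}=K_h(t_{il}-t)\{R_2^{(-il)}-R_1^{(-il)}(t_{il}-t)\}$ and work with leave-one-out (or leave-two-out) normalizers that are genuinely independent of the data at the indices in question. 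This buys a cleaner justification of two points the paper treats loosely: in (b), the factorization $E[(R_\ell^{(-il)})^2]\cdot E[K_h^2(t_{il}-t)x_{ijl}^2x_{ikl}^2]$ is exact rather than implicit; and in (c), the paper's reduction ``it suffices to bound $E(\tilde{w}_{il}\tilde{w}_{i'l'})$ since fourth moments are bounded'' elides the fact that $\tilde{w}_{il}$ and $x_{ijl}$ share $t_{il}$, whereas your conditioning on the sampling times and use of $t_{il}\perp t_{il'}$ within the same trajectory handles the dependent case $i=i'$ explicitly. Both routes land on the same bounds; yours is marginally more rigorous, the paper's marginally shorter.
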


\begin{lemma}\label{lemma:apxmaxvar}
	Under Assumptions \ref{assump:tail}-\ref{assump:indpt-m}, we have for each $t \in \tdomain$,
	\[ \max_{j,k} \left|\sum_{i=1}^{n}\sum_{l=1}^{m_i}\left\{ \tilde{w}_{il} x_{ijl}x_{ikl} - E\left(\tilde{w}_{il}x_{ijl}x_{ikl}\right) \right\}\right| = O_p\{ (\log p)^{1/2} (n^3\bar{m}^3h^3 + n^3\bar{m}^4h^4)^{1/2}\}. \]
\end{lemma}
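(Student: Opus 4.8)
The plan is to prove this elementwise maximal bound by establishing a Bernstein-type tail bound for each fixed pair $(j,k)$ and then taking a union bound over the $p^2$ pairs, the latter producing the $(\log p)^{1/2}$ factor. Throughout $t$ is fixed, matching the pointwise convention. Write $Z_{jk} = \sum_{i=1}^n \sum_{l=1}^{m_i}\{\tilde w_{il}x_{ijl}x_{ikl} - E(\tilde w_{il}x_{ijl}x_{ikl})\}$, so that the target is $\max_{j,k}|Z_{jk}| = O_p\{(\log p)^{1/2}v_n^{1/2}\}$ with $v_n = n^3\bar m^3 h^3 + n^3\bar m^4 h^4$. The crucial first move is to condition on the design $T=\{t_{il}\}$: this renders the weights $\tilde w_{il}$ deterministic and, because the trajectories $X_j(\cdot)$ are independent across subjects and independent of the sampling times, it makes the per-subject blocks $\zeta_i = \sum_l \tilde w_{il}x_{ijl}x_{ikl}$ genuinely independent across $i$. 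Decomposing $Z_{jk}$ into the conditional fluctuation $\sum_i\{\zeta_i - E(\zeta_i\mid T)\}$ and the design fluctuation $\sum_i\{E(\zeta_i\mid T)-E\zeta_i\}$, the first is a sum of independent, conditionally mean-zero blocks with sub-exponential tails, while the second involves only bounded functionals of the design (since $\sigma_{jk}$ and the smoothed weights are bounded) and is controlled by a routine kernel-smoothing concentration of no larger order. The main work is the first piece.

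For the conditional fluctuation I would apply Bernstein's inequality for independent sub-exponential summands. \Cref{assump:tail} ensures each $x_{ijl}x_{ikl}$ is sub-exponential (via $|x_{ijl}x_{ikl}| \le (x_{ijl}^2+x_{ikl}^2)/2$), so by the triangle inequality each block has sub-exponential norm bounded by (number of in-window observations of subject $i$) times $\max_l|\tilde w_{il}|$. Using $\max_i m_i = O(\bar m)$ from \Cref{assump:indpt-m}, the magnitude $|\tilde w_{il}| = O(n\bar m h)$ implied by \Cref{lemma:w_R_rate}(a), and a binomial concentration for the in-window counts, this gives $B := \max_i\|\zeta_i - E(\zeta_i\mid T)\|_{\psi_1} = O(n\bar m^2 h^2)$ with probability tending to one. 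The conditional variance is $V = \sum_i\sum_{l,l'}\tilde w_{il}\tilde w_{il'}\,\mathrm{Cov}(x_{ijl}x_{ikl},x_{ijl'}x_{ikl'}\mid T)$, in which cross-subject pairs vanish by independence, leaving only the diagonal and the within-subject off-diagonal pairs. Taking expectations and invoking \Cref{lemma:w_R_rate}(b) for the diagonal ($O(n\bar m)$ terms of size $O(n^2\bar m^2 h^3)$) and \Cref{lemma:w_R_rate}(c) for the within-subject pairs (only $\sum_i m_i^2 = O(n\bar m^2)$ of them by \Cref{assump:indpt-m}, each $O(n^2\bar m^2 h^4)$) yields $EV = O(v_n)$, and a standard argument gives $V = O_p(v_n)$. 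Bernstein then delivers $P(|Z_{jk} - E(Z_{jk}\mid T)| > s \mid T) \le 2\exp\{-c\min(s^2/v_n, s/B)\}$ on the good design event.

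Finally I would take $s = C(\log p)^{1/2}v_n^{1/2}$, so that $s^2/v_n = C^2\log p$, and verify that the sub-Gaussian branch is active by checking $sB \le v_n$, equivalently $B^2\log p = o(v_n)$. Since $B^2\log p/v_n \le (\log p)/n$, and \Cref{assump:m} forces $\log p = o(n/\bar m + nh) = o(n)$, this holds, so the tail bound reduces to $2p^{-cC^2}$. A union bound over the $p^2$ pairs, with $C$ chosen so that $cC^2 > 2$, then gives $P\{\max_{j,k}|Z_{jk}| > C(\log p)^{1/2}v_n^{1/2}\} \to 0$, which is the assertion.

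The hard part will be the bookkeeping around the within-subject dependence. It is precisely the within-subject off-diagonal covariances, governed by \Cref{lemma:w_R_rate}(c) and the frequency constraint $\sum_i m_i^2 = O(n\bar m^2)$, that generate the second variance term $n^3\bar m^4 h^4$; and it is the coupling of all subjects through $R_1,R_2$ inside $\tilde w_{il}$ that necessitates conditioning on the design in order to recover independence before Bernstein can be applied. The other delicate point is confirming that \Cref{assump:m} places the bound in the sub-Gaussian regime, so that the rate carries $(\log p)^{1/2}$ rather than $\log p$; this is exactly where the ceiling on the sampling frequency is used.
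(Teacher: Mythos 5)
Your proposal is correct and follows essentially the same route as the paper's proof: per-subject blocking of the sum, a sub-exponential Chernoff/Bernstein bound applied conditionally on the design so the blocks are independent, the variance proxy $O(n^3\bar m^3h^3+n^3\bar m^4h^4)$ assembled from \Cref{lemma:w_R_rate}(b)--(c), and a union bound over the $p^2$ pairs with $\gamma_n\asymp(\log p)^{1/2}(n^3\bar m^3h^3+n^3\bar m^4h^4)^{1/2}$. Your explicit split into conditional and design fluctuations, and your verification that \Cref{assump:m} keeps the bound in the sub-Gaussian branch, simply make explicit what the paper handles implicitly through the event $E_n$ and the requirement that the Chernoff parameter $a=\gamma_n/(2B)$ lie in the admissible range.
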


The proofs of Lemmas \ref{lemma:error_varinfty}-\ref{lemma:apxmaxvar} are deferred to the Supplementary Material. In the following, we provide the proof of Theorem \ref{thm:indpt}, while the proof of Theorem \ref{thm:common} is analogous which could be found in the Supplementary Material.

\section{Proofs of main results}

\begin{proof}[Proof of Theorem \ref{thm:indpt}]
	
	From Lemma \ref{lemma:error_varinfty}, we have $d\{ \hat{U}(t), U(t) \} \le C_q \|\hat{\Sigma}(t) - \Sigma(t) - \sigma^2 I_p \|_{\infty}^{1-q/2}$. Thus, it suffices to quantify the error $\|\hat{\Sigma}(t) - \Sigma(t) - \sigma^2 I_p \|_{\infty}$.
	
	Let $\Omega(t) = E(\bX(t)\bX(t)^{\t})$, we have $\Sigma(t) = \Omega(t) - \bmu(t) \bmu(t)^{\t}$. Under the irregular design, using the triangle inequality, 
	\begin{eqnarray} \label{eq:|S-Sigma|}
	& & \|\hat{\Sigma}(t) - \Sigma(t) - \sigma^2I_p\|_{\infty} \nonumber\\
	& = & \|\sum_{i=1}^n \sum_{l=1}^{m_i} w_{il} \by_{il}\by_{il}^{\t}-\sum_{i=1}^n \sum_{l=1}^{m_i} w_{il} \by_{il} \sum_{i=1}^n \sum_{l=1}^{m_i} w_{il} \by_{il}^{\t}-\{\Omega(t) - \bmu(t)\bmu(t)^{\t} \} - \sigma^2 I_p \|_{\infty} \nonumber\\
	& \leq & \bigg \|\sum_{i=1}^n \sum_{l=1}^{m_i} w_{il} \by_{il}\by_{il}^{\t} - \Omega(t) - \sigma^2 I_p \bigg\|_{\infty} + \bigg \| \sum_{i=1}^n \sum_{l=1}^{m_i} w_{il} \by_{il} \sum_{i=1}^n \sum_{l=1}^{m_i} w_{il} \by_{il}^{\t} - \bmu(t)\bmu(t)^{\t} \bigg \|_{\infty} \nonumber\\
	& = & M_1(t) + M_2(t).
	\end{eqnarray}
	Note that 
	\begin{eqnarray}\label{eq:M_1}
	& & M_{1}(t)  =  \|\sum_{i=1}^n \sum_{l=1}^{m_i} w_{il} (\bx_{il}+ \bepsilon_{il} )(\bx_{il}+\bepsilon_{il})^{\t} - \Omega(t) -\sigma^2 I_p  \|_{\infty} \nonumber\\
	&\leq &  \bigg\| \sum_{i=1}^n \sum_{l=1}^{m_i} w_{il} \bx_{il}\bx_{il}^{\t} - \Omega(t) \bigg\|_{\infty}  + 2\bigg \|\sum_{i=1}^n \sum_{l=1}^{m_i} w_{il} \bx_{il}\bepsilon_{il}^{\t} \bigg\|_{\infty}  + 
	\bigg \|\sum_{i=1}^n \sum_{l=1}^{m_i} w_{il} \bepsilon_{il}\bepsilon_{il}^{\t} - \sigma^2 I_p \bigg\|_{\infty},
	\end{eqnarray}
	where $\bepsilon_{il} = (\epsilon_{i1l}, \dots, \epsilon_{ipl})^{\t}$. Similarly,
	\begin{eqnarray}\label{eq:M_2}
	M_2(t) & \leq &  \left\|\sum_{i=1}^n \sum_{l=1}^{m_i} w_{il} \bx_{il} \sum_{i=1}^n \sum_{l=1}^{m_i} w_{il} \bx_{il}^{\t} - \bmu(t)\bmu(t)^{\t} \right\|_{\infty} \nonumber\\
	& & + 2\left\|\sum_{i=1}^n \sum_{l=1}^{m_i} w_{il} \bx_{il} \sum_{i=1}^n \sum_{l=1}^{m_i} w_{il} \bepsilon_{il}^{\t} \right\|_{\infty} + \left\|\sum_{i=1}^n \sum_{l=1}^{m_i} w_{il} \bepsilon_{il} \sum_{i=1}^n \sum_{l=1}^{m_i} w_{il} \bepsilon_{il}^{\t} \right\|_{\infty}.
	\end{eqnarray}
	
	To bound the term $\Delta_n = \| \sum_{i=1}^n \sum_{l=1}^{m_i} w_{il} \bx_{il}\bx_{il}^{\t} - \Omega(t)\|_{\infty}$, we have
	\begin{eqnarray*}
		& & \Delta_n  =  \bigg\| \sum_{i=1}^n\sum_{l=1}^{m_i} w_{il}\bx_{il}\bx_{il}^{\t} - \Omega(t) \bigg\|_{\infty}\\ 
		&= & \max_{j,k}
		\bigg| \frac{\sum_{i=1}^n\sum_{l=1}^{m_i} \left\{ \tilde{w}_{il} x_{ijl}x_{ikl} - E(\tilde{w}_{il}x_{ijl}x_{ikl}) \right\}}{R_0R_2 - R_1^2} + \frac{\sum_{i=1}^n\sum_{l=1}^{m_i} \left\{ E(\tilde{w}_{il}x_{ijl}x_{ikl}) - \tilde{w}_{il}\omega_{jk}(t)\right\}}{R_0R_2 - R_1^2} \bigg| \\
		& \le & 
		\frac{\max_{j,k} \bigg|  \sum_{i=1}^n\sum_{l=1}^{m_i} \left\{ \tilde{w}_{il} x_{ijl}x_{ikl} - E(\tilde{w}_{il}x_{ijl}x_{ikl}) \right\} \bigg|}{ \bigg| R_0R_2 - R_1^2 \bigg|} +   \frac{\max_{j,k} \bigg| \sum_{i=1}^n\sum_{l=1}^{m_i} \left\{ E(\tilde{w}_{il}x_{ijl}x_{ikl}) - \tilde{w}_{il}\omega_{jk}(t)\right\} \bigg|}{\bigg| R_0R_2 - R_1^2 \bigg|} \\
		& = & I + II,
	\end{eqnarray*}
	where $\tilde{w}_{il} = R_2K_h(t_{il}-t) - R_1K_h(t_{il}-t)(t_{il}-t)$ and $\omega_{jk}(t) = E\{x_{ij}(t)x_{ik}(t)\}$. The second equality holds due to the fact that $\sum_{i=1}^n \sum_{l=1}^{m_i}w_{il}=1$.  
	
	Denote $\zeta_1 = \max_{j,k} \bigg|  \sum_{i=1}^n\sum_{l=1}^{m_i} \left\{ \tilde{w}_{il} x_{ijl}x_{ikl} - E(\tilde{w}_{il}x_{ijl}x_{ikl}) \right\} \bigg|$. By Lemma \ref{lemma:apxmaxvar}, we conclude $\zeta_1 = O_p\{ (\log p)^{1/2} (n^3\bar{m}^3h^3 + n^3\bar{m}^4h^4)^{1/2}\}$. From (a) of Lemma \ref{lemma:w_R_rate}, we have $R_0 R_2 - R_1^2 \asymp n^2\bar{m}^2h^2\big(1+o_p(1)\big)$. Consequently,
	\[I = O_p\left\{\left(\frac{\log p}{n\bar{m}h} + \frac{\log p}{n}\right)^{1/2}\right\}. \]
	
	Next we bound the term $II$. Denote that $\zeta_2 = \max_{j,k} \bigg| \sum_{i=1}^n\sum_{l=1}^{m_i} \left\{ E(\tilde{w}_{il}x_{ijl}x_{ikl}) - \tilde{w}_{il}\omega_{jk}(t)\right\} \bigg|$. Notice that
	\begin{eqnarray*}
		\zeta_2 & = & \max_{j,k} \bigg| \sum_{i=1}^n\sum_{l=1}^{m_i} \left\{E(\tilde{w}_{il}\omega_{jk}(t_{il})) - \tilde{w}_{il}\omega_{jk}(t)  \right\} \bigg| \\
		& = & \max_{j,k} \bigg| \sum_{i=1}^n\sum_{l=1}^{m_i} \left(E\left[\tilde{w}_{il}\left\{\omega_{jk}(t) + \omega_{jk}^{(1)}(t)(t_{il}-t) + \frac{\omega_{jk}^{(2)}(\xi_{il})}{2}(t_{il}-t)^2\right\}\right] - \tilde{w}_{il}\omega_{jk}(t)  \right) \bigg| \\
		& \le & \max_{j,k} |\omega_{jk}(t)| \bigg| \sum_{i=1}^n\sum_{l=1}^{m_i} (\tilde{w}_{il} - E \tilde{w}_{il}) \bigg| + \bigg| \sum_{i=1}^n\sum_{l=1}^{m_i} E \left\{\frac{\tilde{w}_{il}\omega_{jk}^{(2)}(\xi_{il})}{2}(t_{il}-t)^2 \right\} \bigg| \\
		& = & II_1 + II_2,
	\end{eqnarray*}
	where $\xi_{il}$ is between $t$ and $t_{il}$, and the inequality holds since $\sum_{i=1}^n\sum_{l=1}^{m_i} \tilde{w}_{il}(t_{il} - t) = 0$. 
	Using similar arguments for the proof of Lemma \ref{lemma:w_R_rate}(b), we obtain $II_1 = O_p\{(n\bar{m}h)^{3/2}\}$. To bound the term $II_2$, notice that $\omega_{jk}^{(2)}(\xi_{il})$ is bounded by Assumption \ref{assump:smoothmean-cov} and 
	\begin{eqnarray*}
		E\{\tilde{w}_{il} (t_{il}-t)^2\} 
		& = & E\left[\left\{ R_2 K_h(t_{il}-t) - R_1 K_h(t_{il}-t)(t_{il}-t) \right\} (t_{il}-t)^2\right].
	\end{eqnarray*}
	Note that
	\begin{eqnarray*}
		E\left\{ R_2 K_h(t_{il}-t) (t_{il}-t)^2 \right\}
		& = & E \left[ \left\{ \sum_{i=1}^n \sum_{l=1}^{m_{i}} K_h(t_{il}-t)(t_{il}-t)^2\right\} K_h(t_{il}-t)(t_{il}-t)^2\right] \\
		& = & O(n\bar{m}h^4),
	\end{eqnarray*}
	by the change of variables. Analogously, we show that $E \left[ \left\{R_1 K_h(t_{il}-t)(t_{il}-t) \right\} (t_{il}-t)^2 \right] = O(n\bar{m}h^4)$. Thus, $II_2 = O(n^2\bar{m}^2h^4)$. 
	According to Lemma \ref{lemma:w_R_rate}(a), we have $R_0R_2 - R_1^2 = n^2\bar{m}^2h^2\big(1+o_p(1)\big)$. Combining these pieces together leads to the fact that $II=O_p\{h^2 + 1/(n\bar{m}h)^{1/2}\}$.
	
	The rates of other terms are proved using similar arguments which are omitted here to save space. By \eqref{eq:|S-Sigma|}, \eqref{eq:M_1} and \eqref{eq:M_2}, we obtain
	\begin{eqnarray*}
		\|\hat{\Sigma}(t) - \Gamma(t)\|_{\infty} =  O_p\left\{\left(\frac{\log p}{n\bar{m}h} + \frac{\log p}{n}\right)^{1/2} + h^2\right\} ,
	\end{eqnarray*}
	which completes the proof together with Lemma \ref{lemma:error_varinfty}.
\end{proof}
\end{appendix}

\bibliographystyle{asa}
\bibliography{DPCA}
\end{document}